\newcommand{\bs}[1]{\boldsymbol{#1}}
\newcommand{\orcid}[1]{\,\href{https://orcid.org/#1}{\includegraphics[width=8pt]{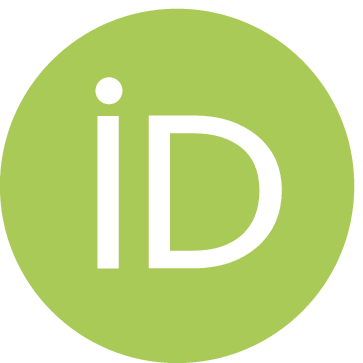}}}
\newcommand{\hl}[1]{\textcolor{red}{#1}}
\newcommand{\I}{\mathcal{I}}
\newcommand{\J}{\mathcal{J}}
\newcommand{\inv}{\mathrm{inv}}
\newtheorem{theorem}{Theorem}
\newtheorem{proposition}{Proposition}
\newtheorem{lemma}{Lemma}
\theoremstyle{remark}\newtheorem*{remark}{Remark}
\title{Moving intervals for packing and covering}
\author{Rain Jiang\orcid{0000-0002-0144-942X}\qquad
Kai Jiang\orcid{0000-0001-8165-0571}\qquad
Minghui Jiang\orcid{0000-0003-1843-9292}\,\thanks{\texttt{ dr.minghui.jiang at gmail.com}}\medskip\\
Home School, USA}
\date{}
\begin{document}

\maketitle

\begin{abstract}
	We study several problems on geometric packing and covering with movement.
	Given a family $\I$ of $n$ intervals of $\kappa$ distinct lengths,
	and another interval $B$,
	can we pack the intervals in $\I$ inside $B$
	(respectively, cover $B$ by the intervals in $\I$)
	by moving $\tau$ intervals
	and keeping the other $\sigma = n - \tau$ intervals unmoved?
	We show that both packing and covering are W[1]-hard with any one of
	$\kappa$, $\tau$, and $\sigma$ as single parameter,
	but are FPT with combined parameters $\kappa$ and $\tau$.
	We also obtain improved polynomial-time algorithms for packing and covering,
	including an $O(n\log^2 n)$ time algorithm for covering,
	when all intervals in $\I$ have the same length.
\end{abstract}

\section{Introduction}

Let $\I$ be a family of $n$ intervals of $\kappa$ distinct lengths
and of total length $\ell_\I$.
Let $B$ be another interval of length $\ell_B$,
in the same line as the intervals in $\I$.
We study the following problems of deciding,
for two parameters $\sigma$ and $\tau$ where $\sigma + \tau = n$,
whether we can move $\tau$ intervals in $\I$,
and keep the other $\sigma$ intervals in $\I$ unmoved,
to achieve certain geometric configurations:
\begin{align*}
	\textsc{Pack}&:
	\textrm{the $n$ intervals in $\I$ are pairwise-disjoint and are contained in $B$,}\\
	\textsc{Cover}&:
	\textrm{the union of the $n$ intervals in $\I$ contains $B$,}\\
	\textsc{Join}&:
	\textrm{the $n$ intervals in $\I$ are joined into a contiguous interval of length $\ell_\I$,}\\
	\textsc{J-Pack}&:
	\textrm{the $n$ intervals in $\I$ are joined into a contiguous interval of length $\ell_\I$ that is contained in $B$,}\\
	\textsc{J-Cover}&:
	\textrm{the $n$ intervals in $\I$ are joined into a contiguous interval of length $\ell_\I$ that contains $B$,}\\
	\textsc{Tile}&:
	\textrm{the $n$ intervals in $\I$ are joined into a contiguous interval of length $\ell_\I$ that coincides with $B$.}
\end{align*}

Without loss of generality, we assume that
$\ell_\I \le \ell_B$ for
\textsc{Pack} and
\textsc{J-Pack},
$\ell_\I \ge \ell_B$ for
\textsc{Cover} and \textsc{J-Cover},
and
$\ell_\I = \ell_B$ for \textsc{Tile}.
The problem \textsc{Tile} is a special case of the four problems
\textsc{Pack}, \textsc{Cover}, \textsc{J-Pack}, and \textsc{J-Cover},
which are equivalent when $\ell_\I = \ell_B$.
The problem \textsc{Join} is similar to the problem \textsc{Tile},
but does not have the constraint of the interval $B$.
Indeed \textsc{Join} can be viewed as a special case of \textsc{J-Pack} where
$B$ contains all intervals in $\I$ and has extra spaces of $\ell_\I$ at both ends.

We represent all intervals in the form $[x, y)$,
which is a set closed at left endpoint $x$ and open at right endpoint $y$.
To avoid unnecessary technicality,
we allow zero-distance moves. That is, an interval may be ``moved''
without changing its position.
This ensures natural monotonicities of $\tau$ and $\sigma$ for these problems.
For example,
denote by \textsc{Pack$(\tau,\sigma)$}
the predicate whether the $n$ intervals in $\I$
can be packed inside the interval $B$
by moving $\tau$ intervals and keeping $\sigma = n - \tau$ intervals unmoved.
If \textsc{Pack$(\tau,\sigma)$} is true and $\sigma > 0$,
then \textsc{Pack$(\tau+1,\sigma-1)$} is also true.
Let
\textsc{Pack$^*$}, \textsc{Cover$^*$},
\textsc{Join$^*$},
\textsc{J-Pack$^*$}, \textsc{J-Cover$^*$},
and \textsc{Tile$^*$}
be the optimization versions of
these decision problems,
for minimizing $\tau$ and maximizing $\sigma$,
and let $\tau^*$ and $\sigma^*$ denote the optimal values,
where $\tau^* + \sigma^* = n$.

In slightly different formulations,
our problems have been studied previously by Mehrandish, Narayanan, and Opatrny~\cite{MNO11},
for the application of intrusion detection and barrier coverage by sensors.
When the number $\kappa$ of distinct lengths of intervals in $\I$ is unrestricted,
Mehrandish et~al.\ proved
that these problems are all weakly NP-hard
by reductions from \textsc{Partition}~\cite[Problem~SP12]{GJ79}.
In particular,
\textsc{Join} is hard
already for $\sigma = 2$~\cite[Theorem~2]{MNO11},
and \textsc{Tile}, hence
\textsc{Pack}, \textsc{Cover}, \textsc{J-Pack}, and \textsc{J-Cover},
are hard already for $\sigma = 1$~\cite[Theorem~3]{MNO11}.
The complexities of these problems for fixed $\kappa$ or $\tau$, however,
have not been examined.

Using standard techniques of dynamic programming, we show that these problems
admit polynomial-time algorithms when either $\kappa$ or $\tau$ is constant:

\begin{proposition}\label{prp:fpt}
	\textsc{Pack},
	\textsc{Join},
	\textsc{J-Pack},
	\textsc{J-Cover},
	and \textsc{Tile}
	admit algorithms running in
	$O(n\log n + n\cdot f(\kappa,\tau))$
	time,
	and
	\textsc{Cover}
	admits an algorithm running in
	$O(n^2\cdot f(\kappa,\tau))$
	time,
	for some function $f$
	that is bounded by a polynomial in $n$ when either $\kappa$ or $\tau$ is constant.
\end{proposition}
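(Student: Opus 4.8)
The plan is to exploit the fact that intervals of equal length are interchangeable, so that any target configuration is determined by (i) the set of intervals left unmoved, which is constrained only by their fixed positions, and (ii) the \emph{multiset of lengths} of the $\tau$ intervals that are moved. Writing the $\kappa$ distinct lengths as $L_1,\dots,L_\kappa$, a moved set is recorded by a vector $\bs{m}=(m_1,\dots,m_\kappa)$ with $m_i\ge 0$ and $\sum_i m_i\le\tau$; the number of such vectors is $\binom{\tau+\kappa}{\kappa}$, which is $O(n^\kappa)$ when $\kappa$ is fixed and $\tau\le n$ is free, and $O(n^\tau)$ when $\tau$ is fixed and $\kappa\le n$ is free. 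This quantity, up to factors depending only on $\kappa$ and $\tau$, will be the function $f(\kappa,\tau)$.

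First I would sort, in $O(n\log n)$ time, the $O(n)$ interval endpoints together with the two endpoints of $B$. For \textsc{Pack} I would then run a left-to-right sweep that selects the unmoved intervals as a disjoint subsequence contained in $B$: each unmoved interval occupies its original position, and the free gap between two consecutive unmoved intervals (and the two end gaps inside $B$) is filled by a sub-multiset of the moved intervals of total length at most the gap. Because intervals may be flushed to one side of a gap, a gap of length $g$ admits \emph{any} sub-multiset of total length at most $g$, so the only bookkeeping needed is the multiset $\bs{m}$ of moved intervals placed so far, and processing the gaps one at a time realizes every admissible partition of the moved intervals among the gaps. I would maintain, for each attainable $\bs{m}$, the best partial selection, and update the table as each endpoint event is processed; with $O(n)$ events, a table of size $\binom{\tau+\kappa}{\kappa}$, and each event handled in time $f(\kappa,\tau)$, this runs in $O(n\log n + n\cdot f(\kappa,\tau))$. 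Acceptance requires that the placed multiset equal the complement of the unmoved set, and the optimization version \textsc{Pack$^*$} follows from the monotonicity of $\tau$ noted earlier, or by reading off the minimum $\sum_i m_i$ over accepting states.

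The four ``joined'' variants \textsc{Join}, \textsc{J-Pack}, \textsc{J-Cover}, and \textsc{Tile} use the same skeleton, with the added rigidity that the $n$ intervals form one contiguous block of length $\ell_\I$: the gap between two consecutive unmoved intervals must now be filled \emph{exactly}, and the absolute position of each unmoved interval must be consistent with its prefix-length within the block, which pins down the admissible translations of the block (a single placement for \textsc{Tile}, and $O(n)$ endpoint-aligned candidates for the others). \textsc{Cover} is the case where overlaps are permitted, so ``filling a gap'' becomes ``covering a segment of $B$'': an uncovered segment left by the unmoved intervals must be covered by moved intervals that may overlap one another and the unmoved intervals. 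Since the combinatorially distinct placements of a covering interval are aligned to the $O(n)$ endpoints, the sweep carries one extra factor of $n$, giving the stated $O(n^2\cdot f(\kappa,\tau))$ bound.

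The main obstacle is the interaction between \emph{choosing} which intervals stay unmoved and \emph{certifying} that the remaining intervals can be packed into (or made to cover) the resulting gaps: in isolation the latter is a bin-packing or set-cover feasibility question, which is NP-hard in general. What makes the approach go through is that bounding either $\kappa$ or $\tau$ bounds the number of distinct moved-multisets $\bs{m}$, so the feasibility test folds into a polynomially sized dynamic-programming table rather than being solved from scratch. I expect the delicate steps to be proving that flushing intervals to one side of each gap loses no generality, so that a gap is characterized solely by its length, and, for the joined and covering variants, verifying that restricting block translations and covering-interval offsets to the $O(n)$ endpoint-aligned candidates preserves optimality.
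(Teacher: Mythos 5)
Your overall plan is the same as the paper's: a left-to-right dynamic program over the choice of unmoved intervals, with gaps characterized by length alone (flushing packed intervals to one side, resp.\ overlapping covering intervals, loses no generality), and with state space controlled by $\kappa$-tuples of multiplicities, of which there are ${\tau+\kappa\choose\kappa}$. However, there is a genuine gap in your state definition. You claim ``the only bookkeeping needed is the multiset $\bs{m}$ of moved intervals placed so far,'' yet your own acceptance condition requires that the placed multiset equal the complement of the unmoved set. A single $\kappa$-tuple cannot certify this: two partial selections reaching the same position with the same placed multiset may have \emph{skipped} (i.e., designated as moved) different multisets of intervals, and only one of them may extend to an accepting configuration. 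Moreover, intervals placed into a gap may originate to the right of the current sweep position, so the skipped multiset is not determined by the placed one at any intermediate stage. The paper's predicate $A(i,\bs{u},\bs{v})$ tracks \emph{both} tuples --- $\bs{u}$ for the intervals withheld from the unmoved chain so far and $\bs{v}$ for the capacity of the gaps so far --- and reconciles them only at the end via $A(n+1,\bs{v},\bs{v})$. The fix is local (square the table, or track $|\bs{u}|$ together with the difference $\bs{u}-\bs{v}$ as in the paper's refined implementation) and stays within $f(\kappa,\tau)$, but as written your table cannot decide the problem.

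Two smaller points. First, your explanation of the extra factor of $n$ for \textsc{Cover} (endpoint-aligned placements of covering intervals) is not the paper's reason: the paper's \textsc{Cover} recurrence is quadratic because an unmoved interval that is redundant (covered by others) need not be moved at all, so consecutive members of the chain $\J$ can be arbitrarily far apart in sorted order; this defeats the lookback bound $i' \ge i-1-\tau$ that makes \textsc{Pack} linear in $n$. You would need to justify your $O(n^2\cdot f)$ bound along these lines, and also to verify that the $\bs{u}$-tuple for \textsc{Cover} must be counted with ``at least'' rather than ``exactly'' for the same reason. Second, for the joined variants, enumerating $O(n)$ block translations and re-running the sweep for each would give $O(n^2\cdot f)$ rather than the claimed $O(n\log n + n\cdot f)$; the paper avoids this by folding the translation into the DP, requiring interior gaps to fit exactly while the two boundary gaps need only have enough space.
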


In terms of parameterized complexity,
Proposition~\ref{prp:fpt} implies that these problems are fixed-parameter tractable (FPT)
with combined parameters $\kappa$ and $\tau$.
Extending the previous hardness results,
we prove the following theorem:

\begin{theorem}\label{thm:w1hard}
	\textsc{Pack}, \textsc{Cover}, \textsc{Join}, \textsc{J-Pack}, \textsc{J-Cover},
	and \textsc{Tile}
	are NP-hard,
	and are W[1]-hard
	with any one of $\kappa$, $\tau$, and $\sigma$ as single parameter,
	even when all interval coordinates are integers encoded in unary.
\end{theorem}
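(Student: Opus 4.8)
The unifying observation behind my plan is that the $\sigma$ unmoved intervals of $\I$, pinned at their input positions inside $B$, cut the free space into gaps, and relocating the remaining $\tau$ intervals to lie disjointly in those gaps is exactly a packing of $\tau$ ``items'' into $\sigma$-determined ``bins.'' For \textsc{Tile} (and hence, by the equivalence stated above for the case $\ell_\I=\ell_B$, also for \textsc{Pack}, \textsc{Cover}, \textsc{J-Pack}, and \textsc{J-Cover}) the gaps must be filled \emph{exactly}. The same gap structure can be forced for \textsc{Join}: once two ``anchor'' intervals are pinned at the extreme ends, contiguity of the length-$\ell_\I$ block compels every gap between consecutive pinned intervals to be filled exactly, so each reduction below will produce both a \textsc{Tile} instance and, by adding anchors, a \textsc{Join} instance with the same answer, covering all six problems at once.

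First I would reduce from \textsc{Unary Bin Packing}, which is strongly NP-hard and W[1]-hard with the number $k$ of bins as parameter. Pin $k-1$ equally spaced separator intervals inside $B$ to create $k$ gaps of the common bin capacity, let the movable intervals be the (unary) items, and pad with unit-length fillers so that the total movable length equals the total gap capacity; then an exact tiling exists iff the items admit a perfect packing into the $k$ bins. Because only $\sigma=\Theta(k)$ intervals are pinned while $\tau$ and $\kappa$ are free to grow, this already yields strong NP-hardness and W[1]-hardness with $\sigma$ as the single parameter, with all coordinates integral, polynomially bounded, and therefore expressible in unary.

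The parameters $\kappa$ and $\tau$ cannot be obtained from bin packing, since packing with a constant number of distinct item lengths is known (Goemans and Rothvoss) to be polynomial, so $\kappa$-hardness must come from the geometry rather than from the sizes. Instead I would reduce from \textsc{Multicolored Clique} with parameter $k$, using the standard scaffold of one vertex-gadget per color class and one edge-gadget per pair of classes, each realized as a gap whose exact fill is possible iff the selected endpoints are adjacent, and then specialize this scaffold in two ways. The crucial point is that a gap's capacity is governed by the \emph{positions} of the pinned intervals, not by the interval lengths; so for the parameter $\kappa$ I would encode vertex identities through a Sidon-type (distinct-subset-sum) position assignment while reusing only $O(k^2)$ distinct lengths across all gadgets, keeping $\kappa=O(k^2)$ bounded even though $\tau$ is large. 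For the parameter $\tau$ I would instead move only the $O(k^2)$ selector intervals (one per chosen vertex and per checked edge) and leave the entire verifying scaffold pinned, so that $\tau=O(k^2)$ is bounded while $\kappa$ is unconstrained.

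The hard part will be the $\tau$ reduction: with only $O(k^2)$ moves the pinned scaffold alone must certify all $\binom{k}{2}$ pairwise adjacencies, so each edge-gadget has to force consistency between an edge-selector and the two vertex-selectors it joins purely through exact gap-fills, admitting no spurious packing that borrows a length intended for another gadget. I would control this ``crosstalk'' with a distinct-sum encoding that makes every gap fillable only by its intended combination, and I expect the bulk of the technical effort to lie in proving that no unintended fill exists; the same encoding must keep the Sidon values polynomially bounded so that the whole construction remains expressible with integer coordinates in unary, as required.
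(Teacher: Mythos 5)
Your high-level plan coincides with the paper's: unary \textsc{Bin Packing} (W[1]-hard in the number of bins) for NP-hardness and the parameter $\sigma$, and two \textsc{Multicolored Clique} reductions for the parameters $\kappa$ and $\tau$, with vertex/edge gadgets whose exact fills certify adjacency. However, what you have written is a plan whose deferred steps are precisely the content of the proof, and two of the deferrals hide genuine obstacles rather than routine work. First, you repeatedly speak of ``pinning'' a scaffold of separator intervals, but in these problems one cannot designate which intervals stay put; the instance must \emph{force} any valid solution to keep them unmoved. The paper does this with two distinct mechanisms: for the $\sigma$ reduction, all item intervals share a left endpoint and hence pairwise intersect, so with $\sigma=\hat\kappa+1$ the unmoved set is forced to be the $\hat\kappa$ separators plus one item; for the $\kappa$ and $\tau$ reductions, each unit separator is subdivided into $\tau+1$ tiny fractional intervals so that $\tau$ moves cannot disturb the separator structure, and the $\hat\kappa+\binom{\hat\kappa}{2}$ selector intervals are forced to move because they overlap the leftmost separator. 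Without some such forcing argument the claimed equivalences do not hold, and your \textsc{Join} anchors have the same problem (nothing prevents the anchors themselves from moving).

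Second, the consistency mechanism --- ensuring an edge gadget can only be filled consistently with the two chosen vertices, with no crosstalk --- is the crux, and your Sidon-type sketch does not yet engage with the real difficulty of each regime. For parameter $\kappa$ you have only $O(\hat\kappa^2)$ lengths but must distinguish $\hat n$ vertices; positions of movable intervals carry no information after they move, so identity must be carried by \emph{multiplicities}. The paper gives vertex $i$ exactly $i$ copies of each incidence length $\ell_{\imath\jmath}^{\pm}$ and the edge $\{i,j\}$ exactly $\hat n-1-i$ copies, so that the gap intervals (exactly $\hat n-1$ of each such length) can be filled only when the counts are complementary, i.e.\ only when the selected edge's endpoint matches the selected vertex. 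For parameter $\tau$, the paper instead uses unboundedly many lengths but makes each ``pairing'' gap of length $12\ell$ coverable only by two intervals of complementary incidence lengths $\dot\ell_{i\jmath}^{+}+\ddot\ell_{i\jmath}^{-}=\dot\ell_{i\jmath}^{-}+\ddot\ell_{i\jmath}^{+}=\ell_p$, one from a vertex gadget and one from an edge gadget, with a length-separation argument ($4\ell<\ddot\ell<6\ell<\dot\ell<8\ell$) ruling out all other combinations. There is also a replacement trick you do not mention: the large $\I_3$ intervals of lengths $\ell_v,\ell_e$ move in to cover exactly the vertex and edge slots whose constituent pieces move out to the gaps, which is what lets only $O(\hat\kappa^2)$ intervals move in total. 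Until you specify encodings playing these roles and prove the no-unintended-fill claims, the theorem is not established.
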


When all intervals in $\I$ have the same length,
that is, when $\kappa=1$,
Mehrandish et~al.~\cite[Theorems~4, 7, 10, 11]{MNO11}
presented an $O(n^2)$ time algorithm for
\textsc{Join$^*$}
and \textsc{J-Pack$^*$},
and an $O(n^3)$ time algorithm
for \textsc{Pack$^*$}
and \textsc{Cover$^*$}.
We obtain improved algorithms for these problems:

\begin{proposition}\label{prp:j1}
	When $\kappa = 1$,
	\textsc{Join$^*$}, \textsc{J-Pack$^*$}, \textsc{J-Cover$^*$}, and \textsc{Tile$^*$}
	admit algorithms running in $O(n\log n)$ time.
\end{proposition}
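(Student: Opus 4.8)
The plan is to treat all four problems uniformly. When $\kappa=1$ every interval has a common length $w$, so $\ell_\I = nw$, and any joined configuration is a block $[p,p+nw)$ partitioned into $n$ slots $[p+iw,\,p+(i+1)w)$ for $i=0,\dots,n-1$. The four problems differ only in the admissible range of the block's left endpoint $p$: \textsc{Join} allows any $p\in\mathbb{R}$; \textsc{Tile} forces $p=b$ where $B=[b,b+nw)$; \textsc{J-Pack} allows $p\in[b,\,b+\ell_B-nw]$; and \textsc{J-Cover} allows $p\in[b+\ell_B-nw,\,b]$. For a fixed $p$, an interval can stay unmoved only if its left endpoint already coincides with a slot start $p+iw$; since each slot holds at most one interval in the target, the maximum number $\sigma(p)$ of unmoved intervals equals the number of distinct slot starts among $\{p,p+w,\dots,p+(n-1)w\}$ occupied by some left endpoint. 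The remaining $n-\sigma(p)$ intervals can always be relocated to the empty slots, so each problem reduces to maximizing $\sigma(p)$ over its admissible range, with $\tau^*=n-\sigma^*$.

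The key observation is that $\sigma(p)$ is nonzero only when $p$ lies in the same residue class modulo $w$ as some left endpoint, and then only endpoints congruent to $p$ modulo $w$ can fill slots. I would therefore first sort the left endpoints and group them by residue class modulo $w$ (breaking ties by value), deduplicating coincident positions within each class, all in $O(n\log n)$ time. Within one class, writing each distinct position as $r+wk$ with integer $k$, a block placed at $p=r+wm$ captures exactly the positions with $k\in[m,\,m+n-1]$; hence $\sigma$ restricted to this class is the number of stored $k$-values inside a sliding window of width $n-1$. A two-pointer sweep over the sorted $k$-values then computes, in time linear in the class size, the maximum count over all window positions, visiting only the $O(t)$ breakpoints at which the captured count can change (so the potentially enormous coordinate range of \textsc{J-Pack} causes no blowup).

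It remains to respect the admissible range of $p$ for each variant. For \textsc{Join} the window may start anywhere, so the unconstrained two-pointer maximum suffices. For \textsc{Tile} the placement is the single value $p=b$, so each class contributes at most one window and $\sigma^*$ is read off directly. For \textsc{J-Pack} and \textsc{J-Cover}, the constraint on $p$ translates, within each residue class, into an integer interval $m\in[m_{\mathrm{lo}},m_{\mathrm{hi}}]$ of admissible window starts, and the sweep is simply confined to the breakpoints inside that interval together with its endpoints. Summing the linear per-class work gives $O(n)$ after the initial sort, for an overall running time of $O(n\log n)$, improving the $O(n^2)$ bounds of Mehrandish et~al.

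The main difficulty I expect is bookkeeping rather than depth: correctly collapsing coincident left endpoints so that a slot occupied by several intervals is counted once (only one of them can remain unmoved), and correctly converting the geometric constraint on $p$ into the integer window-start range $[m_{\mathrm{lo}},m_{\mathrm{hi}}]$ for the two containment variants, including the boundary rounding. Grouping the real-valued endpoints by residue class modulo $w$ also deserves care, since endpoints in the same class need not be adjacent in the global sorted order; sorting by the pair (residue, value) resolves this cleanly and keeps everything within the $O(n\log n)$ budget.
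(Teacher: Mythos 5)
Your proposal is correct and follows essentially the same route as the paper: group the left endpoints by residue class modulo the common length (the paper's ``fractional coordinate''), sort once in $O(n\log n)$ time, and run a linear two-pointer/sliding-window scan within each class, restricting the admissible block placements for the containment variants. The only cosmetic differences are that the paper normalizes to unit length and treats \textsc{Tile$^*$} as a special case of \textsc{J-Pack$^*$}/\textsc{J-Cover$^*$} rather than handling $p=b$ directly.
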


\begin{theorem}\label{thm:pc1}
	When $\kappa = 1$,
	\textsc{Pack$^*$} and \textsc{Cover$^*$}
	admit algorithms running in $O(n\log n + n\min\{\sigma^*,\tau^*\})$ time,
	and \textsc{Cover$^*$} admits
	an algorithm running in $O(n\log^2 n)$ time.
\end{theorem}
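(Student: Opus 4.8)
The plan is to exploit the equal-length assumption $\kappa=1$ (write $w$ for the common length and $B=[0,L)$) to collapse both geometric problems into one-dimensional counting problems over the choice of the unmoved subset. Since every moved interval has length exactly $w$, the moved intervals are interchangeable and, once the unmoved intervals are fixed, only their \emph{number} matters. Concretely, for \textsc{Pack} a set $U$ of unmoved intervals is admissible iff its intervals are pairwise disjoint and contained in $B$; the free gaps they leave in $B$ then admit exactly $\sum_g\lfloor g/w\rfloor$ further intervals, so $U$ extends to a full packing iff $\sum_g\lfloor g/w\rfloor\ge\tau=n-|U|$. For \textsc{Cover} any $U$ is admissible (overlaps are harmless), the part of $B$ left uncovered is a union of gaps of lengths $h_1,h_2,\dots$, and these are coverable iff $\sum_j\lceil h_j/w\rceil\le\tau=n-|U|$. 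First I would record (and prove) that, because zero-distance moves are allowed, feasibility is monotone in $\tau$, so $\tau^*$ is a threshold locatable by search.

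Next I would turn each counting condition into a chain optimization on the intervals sorted by left endpoint (an $O(n\log n)$ preprocessing step). Reading the unmoved intervals left to right, the quantity above decomposes additively over consecutive chosen intervals: a gap $g$ between two successive unmoved intervals contributes $\lfloor g/w\rfloor$ of spare capacity (for \textsc{Pack}) or $\lceil g/w\rceil$ moved intervals (for \textsc{Cover}), with analogous boundary terms at $0$ and $L$. Thus $\tau^*$ is the value of a shortest-chain problem with nonnegative integer edge costs. To obtain the $n\min\{\sigma^*,\tau^*\}$ term I would solve this by expanding cost levels in the style of a bounded $0$--$1$ BFS: one version processes the $O(n)$ states in order of the number of \emph{moved} intervals and halts after $\tau^*$ levels, while a dual version orders states by the number of \emph{kept} intervals and halts after $\sigma^*$ levels. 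Running the two in lockstep and stopping as soon as either certifies optimality costs $O(n)$ per level, hence $O(n\log n + n\min\{\sigma^*,\tau^*\})$ overall.

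For the sharper $O(n\log^2 n)$ bound for \textsc{Cover} I would instead use the monotonicity directly: binary search on $\tau\in\{0,\dots,n\}$ and, for each candidate, test feasibility by a single left-to-right sweep. The sweep computes, for each interval $i$ taken as the rightmost unmoved interval, the minimum number of moved intervals $\mathrm{dp}[i]$ needed so far, via $\mathrm{dp}[i]=\min_j\bigl(\mathrm{dp}[j]+\lceil(\text{gap between }j\text{ and }i)/w\rceil\bigr)$. Once the additive $\lceil\cdot/w\rceil$ weight is folded into a monotone key this is a range-minimum query, so a balanced BST or segment tree answers each in $O(\log n)$; each test is then $O(n\log n)$, and the $O(\log n)$ search rounds give $O(n\log^2 n)$.

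The hard part, where I expect the real work, is proving the two reductions \emph{exact} rather than merely necessary: I must show that whenever the count condition holds the moved intervals can actually be realized as genuine disjoint (resp.\ covering) placements, with care for intervals straddling an endpoint of $B$ and for the floor/ceiling rounding at the two boundaries --- precisely where the naive guess that $\tau^*$ equals the all-unmoved gap count can fail. I also anticipate nontrivial effort in (i) the exchange argument showing the chain DP is optimal despite the freedom to leave intervals unmoved-but-unused, and (ii) engineering the range query so that the position difference inside $\lceil\cdot/w\rceil$ is handled by range minima; a convenient device is to split candidates by the residue of the left endpoint modulo $w$, so that within a class the weight becomes linear and the query reduces to a prefix/suffix minimum. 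Finally I must verify that the two-sided level expansion terminates within $\min\{\sigma^*,\tau^*\}+1$ levels and that \textsc{Pack}'s disjointness constraint does not disrupt the sliding-window structure.
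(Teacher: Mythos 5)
Your setup --- reducing both problems to a condition on the unmoved subset $U$, with each gap $g$ contributing capacity $\lfloor g/w\rfloor$ for packing or demand $\lceil g/w\rceil$ for covering --- matches the paper's starting point, and your feasibility criteria $\sum_g\lfloor g/w\rfloor\ge n-|U|$ and $\sum_j\lceil h_j/w\rceil\le n-|U|$ are correct. But the algorithmic core is missing. The difficulty is that the objective and the constraint are coupled through $|U|$: you must maximize $|U|$ subject to $\mathrm{fill}(U)\le n-|U|$, so this is a genuinely bicriteria chain problem, not ``a shortest-chain problem with nonnegative integer edge costs.'' A DP that tracks only the minimum fill count (your $\mathrm{dp}[i]=\min_j(\mathrm{dp}[j]+\lceil\cdot\rceil)$) loses the chain length, and a DP that tracks only the length loses the fill count; neither alone can answer the feasibility query for a fixed $\tau$, because two chains ending at $i$ with equal fill but different lengths are not interchangeable. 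The paper's key move, which your proposal never finds, is a telescoping identity: writing the per-gap count as $\lfloor x_{s_{h+1}}\rfloor-\lfloor x_{s_h}\rfloor\mp w(s_h,s_{h+1})$, the sum of the floor parts collapses to $\lfloor\ell_B\rfloor+1$, so the coupled condition becomes ``the number of fractional-part inversions (\emph{drops}) along the chain is at most a fixed budget $d^*$,'' independent of $|U|$. Only then does the problem become a clean single-budget longest-chain computation, and the two sequential searches (by number of kept indices $h$, or by number of skipped indices $i-h$) give the $O(n\log n+n\min\{\sigma^*,\tau^*\})$ bound. Your ``two-sided level expansion'' gestures at the right dovetailing idea but is applied to the wrong state space.

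The $O(n\log^2 n)$ claim for \textsc{Cover$^*$} has a second, independent gap. Binary search on $\tau$ with an $O(n\log n)$ feasibility sweep presupposes that feasibility for a fixed $\tau$ is a single range-minimum DP, which it is not for the reason above; after the reduction to drops, the natural DP costs $O(n\log n)$ \emph{per drop level}, i.e.\ $O(d^*n\log n)$ with $d^*$ possibly $\Theta(n)$. The paper instead binary searches on a Lagrange multiplier $\lambda$ penalizing each drop (the ``Aliens trick''), and the correctness of that search rests on proving that $\sigma(d,n+1)$ is concave in $d$ --- a structural statement about the DP table whose inductive proof (Lemmas~\ref{lem:pair}--\ref{lem:inclusion}) is the bulk of the technical work. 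Nothing in your proposal substitutes for that concavity argument, so the $O(n\log^2 n)$ bound is not established. To repair the proposal you would need (i) the telescoping/inversion reformulation, and (ii) either the concavity lemma enabling Lagrangian relaxation or some other device that avoids paying per drop level.
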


We leave two open questions:
\begin{itemize}\setlength\itemsep{0pt}

		\item
			Is there an exact algorithm running in $O(n\log^2 n)$ time
			for \textsc{Pack$^*$} on intervals of the same length?

		\item
			Are there FPT algorithms with combined parameters $\kappa$ and $\sigma$
			for \textsc{Pack} and \textsc{Cover}?

\end{itemize}

\section{Algorithms on intervals of the same length for
\textsc{J-Pack$^*$} and \textsc{J-Cover$^*$}}

In this section we prove Proposition~\ref{prp:j1} on the four problems
\textsc{Join$^*$},
\textsc{J-Pack$^*$}, \textsc{J-Cover$^*$},
and \textsc{Tile$^*$}.
Recall that
\textsc{Join$^*$} is a special case of \textsc{J-Pack$^*$},
and
\textsc{Tile$^*$} is a special case of
\textsc{J-Pack$^*$} and \textsc{J-Cover$^*$}.
Thus it suffices to present algorithms for
\textsc{J-Pack$^*$} and \textsc{J-Cover$^*$}.

For any real number $x$,
denote by $\{x\} = x - \lfloor x \rfloor$
the \emph{fractional part} of $x$.
For any interval $[x,x+\ell)$ of integer length $\ell$,
define its \emph{fractional coordinate} as $\{x\}$.

Let $B = [0,\ell_B)$,
and $\I = \{ I_1,\ldots,I_n \}$,
where $I_i = [x_i, x_i + 1)$.
Then $\ell_\I = n$.
For $\ddot x \in [0,1)$,
let $\I(\ddot x)$ be the subfamily of intervals in $\I$
with fractional coordinate $\ddot x$.
As in~\cite[Theorem~4 and Theorem~7]{MNO11},
the problem
\textsc{J-Pack$^*$} (respectively, \textsc{J-Cover$^*$})
reduces to computing,
for every distinct fractional coordinate $\ddot x \in [0,1)$ of the intervals in $\I$,
the maximum number of pairwise-disjoint intervals in $\I(\ddot x)$
that are inside some interval $B_{\ddot x}$
of fractional coordinate $\ddot x$ and of length $\ell_\I = n$,
such that $B_{\ddot x}\subseteq B$ (respectively, $B_{\ddot x}\supseteq B$).
Then the intervals in $\I$ can be joined into a contiguous interval
that coincides with $B_{\ddot x}$.

The algorithm for \textsc{J-Pack$^*$} (respectively, \textsc{J-Cover$^*$})
works as follows.
First sort the intervals in $\I$ lexicographically,
by considering each interval $I_i$ as a pair of numbers $(\{x_i\}, x_i)$.
Then the intervals in each subfamily $\I(\ddot x)$ appear consecutively
in the sorted list.
Next scan each subfamily $\I(\ddot x)$ independently.

For each $\ddot x$,
regardless of the intervals in $\I(\ddot x)$,
there exists an interval $B_{\ddot x}\subseteq B$ (respectively, $B_{\ddot x}\supseteq B$)
of fractional coordinate $\ddot x$ and length $n$
if and only if
either $\ddot x = 0$ or
$[\ddot x, \ddot x + n)\subseteq[0,\ell_B) 
$
(respectively, $[\ddot x - 1, \ddot x - 1 + n)\supseteq[0,\ell_B)$),
which can be checked in constant time.
If this condition is satisfied,
then proceed to find the maximum number of pairwise-disjoint intervals in $\I(\ddot x)$
that are contained in $[0, \ell_B)$ (respectively, in $[\ell_B - n, n)$)
such that the span $x_j - x_i + 1$
of the first interval $I_i$ and the last interval $I_j$
is at most $n$.
This can be done in linear time using the standard ``two pointers'' technique.
These intervals in $\I(\ddot x)$, with span at most $n$,
are inside some interval $B_{\ddot x}$ of length exactly $n$
such that $B_{\ddot x}\subseteq[0, \ell_B)$
(respectively, $B_{\ddot x}\subseteq[\ell_B - n, n)$ and hence $B_{\ddot x}\supseteq[0,\ell_B)$).

The overall running time of the algorithm is $O(n\log n)$, which is dominated by the sorting.
This completes the proof of Proposition~\ref{prp:j1}.

\bigskip
\begin{remark}
	We have used fractional numbers for the interval coordinates so that
	the intervals in $\I$ have a convenient unit length of $1$.
	Alternatively, we could let $B$ and the intervals in $\I$ have integer coordinates,
	and let the uniform length $\ell$ of the intervals in $\I$ be an integer greater than $1$.
	In this formulation, we would define $\{x\} = x \bmod \ell$.
	Then the same idea yields an algorithm of the same $O(n\log n)$ running time.
\end{remark}

\section{Algorithms on intervals of the same length for
\textsc{Pack$^*$} and \textsc{Cover$^*$}}

In this section we prove
Theorem~\ref{thm:pc1} on the two problems
\textsc{Pack$^*$} and \textsc{Cover$^*$}.

Let $B = [0,\ell_B)$,
and $\I = \{ I_1,\ldots,I_n \}$,
where $I_i = [x_i, x_i + 1)$.
Then $\ell_\I = n$.
Add two dummy intervals
$I_0 = [x_0, x_0 + 1) = [-1, 0)$
and
$I_{n+1} = [x_{n+1}, x_{n+1} + 1) = [\ell_B, \ell_B + 1)$.
Any interval in $\I$ that does not intersect $B$ can be relocated to either $[-1, 0)$
or $[\ell_B, \ell_B + 1)$ without affecting the answer to either problem.
After relocating outside intervals and sorting,
we can assume that
$$
-1 = x_0 \le x_1 \le x_2 \le \ldots \le x_n \le x_{n+1} = \ell_B.
$$

When comparing pairs,
we use notations such as $<,\le,>,\ge,\min,\max$,
and use terms such as \emph{largest} and \emph{smallest},
in terms of lexicographic order.
For example, $(1, 2) < (2, 1)$, and $\min\{ (1, 2), (2, 1) \} = (1, 2)$.

\subsection{Algorithms for \textsc{Pack$^*$}}

For $0 \le i < j \le n + 1$,
write $i \prec j$ if $x_j - x_i \ge 1$.
For $0 \le j \le n + 1$,
let $l_j$ be the largest index $i$ such that $0 \le i \prec j$,
or $-1$ if no such $i$ exists.
With increasing $x_i$ for $0 \le i \le n + 1$,
$l_i$ for all $i$
can be computed in $O(n)$ time using the standard ``two pointers'' technique.

Recall that
$\{x\} = x - \lfloor x \rfloor$.
For $0 \le i < j \le n + 1$,
define
$$
w(i, j)=
\begin{cases}
	1&\textrm{if $\{x_i\} > \{x_j\}$},\\
	0&\textrm{otherwise}.
\end{cases}
$$
It is easy to check that
the $n$ intervals in $\I$ can be packed inside $B$
by moving $n - \sigma$ intervals and keeping $\sigma$ intervals unmoved
if and only if
there is a sequence $\bs s$ of $\sigma + 2$ indices $s_h$, $0 \le h \le \sigma+1$,
where
$$
0 = s_0 \prec s_1 \prec \ldots \prec s_{\sigma+1} = n + 1,
$$
such that the following condition holds:
$$
\sum_{h=0}^\sigma
\left(
\lfloor x_{s_{h+1}} \rfloor - \lfloor x_{s_h} \rfloor
- w(s_h, s_{h+1})
\right)
\ge n + 1,
$$
which simplifies to
\begin{align*}
	\lfloor x_{n + 1} \rfloor - \lfloor x_0 \rfloor - \sum_{h=0}^\sigma w(s_h, s_{h+1})
	&\ge n + 1\\
	\lfloor \ell_B \rfloor + 1 - \sum_{h=0}^\sigma w(s_h, s_{h+1})
	&\ge n + 1\\
	\sum_{h=0}^\sigma w(s_h, s_{h+1})
	&\le \lfloor \ell_B \rfloor - n.
\end{align*}

Let $\bs\pi$ be a permutation of the $n + 2$ indices $0, 1, \ldots, n, n + 1$
such that for $0 \le i < j \le n + 1$,
$\pi_i > \pi_j$ if and only if
$\{x_i\} > \{x_j\}$,
which can be found by sorting in $O(n\log n)$ time.
For $0 \le i < j \le n + 1$,
denote by $\inv(i, j)$ the indicator variable which is $1$ when $\pi_i > \pi_j$
and is $0$ otherwise.
Then $w(i, j) = \inv(i, j)$.

From this new perspective,
the problem of determining the maximum value $\sigma^*$,
such that
the $n$ intervals in $\I$ can be packed inside $B$
by moving $n - \sigma^*$ intervals and keeping $\sigma^*$ intervals unmoved,
reduces to
the problem of
finding the maximum length $\sigma^* + 2$ of a sequence $\bs s$ of indices
$0 = s_0 \prec s_1 \prec \ldots \prec s_{\sigma^*+1} = n + 1$,
such that
the number of inversions $\pi_i > \pi_j$ between consecutive indices $i < j$ in $\bs s$
(which we will call \emph{drops})
is at most $d^* = \lfloor \ell_B \rfloor - n$.

\paragraph{Dynamic programming}

For $0 \le h \le i \le n$,
define $dp(h,i)$ as the pair $(d(h,i), p(h,i))$,
where $d(h,i)$ is the minimum number of drops
in any sequence $\bs s$ of $h + 1$ indices
$0 = s_0 \prec \ldots \prec s_h \le i$,
and $p(h,i)$ is the minimum value of $p = \pi_{s_h}$
over all such sequences $\bs s$ with exactly $d = d(h,i)$ drops.
If there are no such sequences $\bs s$,
define $dp(h,i) = (\infty,0)$.
For convenience,
also define $dp(h,i) = (\infty,0)$ for $h > i$.

The table $dp$ can be computed by dynamic programming.
For the base case when $h = 0$ or $h = i$,
let $dp(0,i) = (0, \pi_0)$
for $0 \le i \le n$,
and, for $1 \le i \le n$, let
$$
dp(i,i) =
\begin{cases}
	dp(i - 1, i - 1)\to i & \textrm{if $i - 1 \prec i$},\\
	(\infty, 0) & \textrm{otherwise}.
\end{cases}
$$
where
$$
(d, p) \to i =
\begin{cases}
	(\infty, 0) & \textrm{if $(d, p) = (\infty, 0)$},\\
	(d, \pi_i) & \textrm{else if $p < \pi_i$},\\
	(d + 1, \pi_i) & \textrm{otherwise}.
\end{cases}
$$

Then for $0 < h < i \le n$,
in particular, for $h = 1,\ldots,n - 1$ and $i = h+1,\ldots,n$,
we can compute $dp(h,i)$ using the recurrence
$$
dp(h,i) = \min\{\, dp(h, i-1),\, dp(h-1, l_i) \to i \,\}.
$$

Note that $\sigma^*$ is the maximum $h$
such that the $d$-part of
$dp(h, l_{n + 1})\to n+1$
is at most $d^*$,
which can be found in $O(n \sigma^*)$ time by a sequential search
for increasing values of $h$.

We can also find $\tau^* = n - \sigma^*$ in $O(n \tau^*)$ time by another sequential search,
for increasing values of $t = i - h$ instead of $h$.
First compute $dp(h,i)$ for the base case when $h = 0$ or $h = i$ as before,
then compute $dp(h,h+t)$
for $t = 1, \ldots, n - 1$ and $h = 1,\ldots,n - t$.
Note that $\tau^*$ is the minimum $t$
such that the $d$-part of
$dp(n-t, l_{n + 1})\to n+1$
is at most $d^*$.

Thus we have an algorithm for \textsc{Pack$^*$}
running in $O(n\log n + n\min\{\sigma^*,\tau^*\})$ time.

\subsection{Algorithms for \textsc{Cover$^*$}}

Recall that
$\{x\} = x - \lfloor x \rfloor$.
For $0 \le i < j \le n + 1$,
define
$$
w(i, j)=
\begin{cases}
	1&\textrm{if $\{x_i\} < \{x_j\}$ or $x_i = x_j$},\\
	0&\textrm{otherwise}.
\end{cases}
$$
It is easy to check that
$B$ can be covered by the $n$ intervals in $\I$
by moving $n - \sigma$ intervals and keeping $\sigma$ intervals unmoved
if and only if
there is a sequence $\bs s$ of $\sigma + 2$ indices $s_h$, $0 \le h \le \sigma+1$,
where
$$
0 = s_0 < s_1 < \ldots < s_{\sigma+1} = n + 1,
$$
such that the following condition holds:
$$
\sum_{h=0}^\sigma
\left(
\lfloor x_{s_{h+1}} \rfloor - \lfloor x_{s_h} \rfloor
+ w(s_h, s_{h+1})
\right)
\le n + 1,
$$
which simplifies to
\begin{align*}
	\lfloor x_{n + 1} \rfloor - \lfloor x_0 \rfloor + \sum_{h=0}^\sigma w(s_h, s_{h+1})
	&\le n + 1\\
	\lfloor \ell_B \rfloor + 1 + \sum_{h=0}^\sigma w(s_h, s_{h+1})
	&\le n + 1\\
	\sum_{h=0}^\sigma w(s_h, s_{h+1})
	&\le n - \lfloor \ell_B \rfloor.
\end{align*}

We prove a technical lemma in the following:
\begin{lemma}\label{lem:sorting}
	For $0 \le i < j \le n + 1$,
	$(\{x_i\}, -x_i, i) < (\{x_j\}, -x_j, j)$
	if and only if
	$\{x_i\} < \{x_j\}$ or $x_i = x_j$.
\end{lemma}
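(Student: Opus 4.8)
The plan is to unfold the lexicographic comparison on the triples and verify the biconditional by a short case analysis on how $\{x_i\}$ compares with $\{x_j\}$. Recall that lexicographic order declares $(\{x_i\}, -x_i, i) < (\{x_j\}, -x_j, j)$ in exactly one of three situations: when $\{x_i\} < \{x_j\}$; when $\{x_i\} = \{x_j\}$ and $-x_i < -x_j$; or when $\{x_i\} = \{x_j\}$, $-x_i = -x_j$, and $i < j$. I would check that each of these situations matches the right-hand disjunction $\{x_i\} < \{x_j\}$ or $x_i = x_j$.

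First I would dispatch the case $\{x_i\} \neq \{x_j\}$, where only the leading coordinate matters. If $\{x_i\} < \{x_j\}$, the first coordinate settles the lexicographic comparison in favor of the left triple, so the left side holds, and the right side holds by its first disjunct. If instead $\{x_i\} > \{x_j\}$, the left side fails; and on the right, the first disjunct fails, while $x_i = x_j$ would force $\{x_i\} = \{x_j\}$, a contradiction, so the second disjunct fails too. In both sub-cases the two sides agree.

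The remaining case $\{x_i\} = \{x_j\}$ is where the tiebreaker coordinate $-x_i$ carries the argument. Here $x_j - x_i$ is an integer, and because $i < j$ the sorted order gives $x_i \le x_j$, so $x_j - x_i \ge 0$. If $x_i = x_j$, then the first two coordinates coincide and the tie is broken by $i < j$, so the left side holds; the right side holds by its second disjunct. If instead $x_i < x_j$, then $-x_i > -x_j$, so the second coordinate makes the left triple strictly larger and the left side fails, while on the right both disjuncts fail. Thus the biconditional holds in every case.

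The only point I expect to require care — really the crux rather than a genuine obstacle — is that the equivalence hinges on the monotonicity $x_i \le x_j$ for $i < j$ established after sorting. It is precisely this assumption that rules out the branch $-x_i < -x_j$ (equivalently $x_i > x_j$) when the fractional parts tie, so that the $-x_i$ tiebreaker faithfully detects exact equality $x_i = x_j$; without the sorted-order assumption the lemma would fail. I would therefore state that assumption explicitly at the outset of the proof.
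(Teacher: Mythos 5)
Your proof is correct and follows essentially the same route as the paper's: both unfold the lexicographic order into its three branches and use the sorted-order fact $x_i \le x_j$ for $i < j$ to eliminate the branch $x_i > x_j$ when the fractional parts tie. Your exhaustive case analysis on how $\{x_i\}$ compares with $\{x_j\}$ is just a reorganization of the paper's two-implication argument, with the same key observation correctly identified as the crux.
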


\begin{proof}
	We first prove the \emph{only if} implication.
	Suppose that
	$(\{x_i\}, -x_i, i) < (\{x_j\}, -x_j, j)$.
	Then by the lexicographic order,
	there are three cases:
	either $\{x_i\} < \{x_j\}$,
	or $\{x_i\} = \{x_j\}$ and $x_i > x_j$,
	or $\{x_i\} = \{x_j\}$ and $x_i = x_j$ and $i < j$.
	Recall that $x_i \le x_j$ for $i < j$.
	Thus the second case does not hold.
	The third case simplifies to $x_i = x_j$.
	In summary, we have
	either $\{x_i\} < \{x_j\}$
	or $x_i = x_j$.

	We next prove the \emph{if} implication.
	Suppose that $\{x_i\} < \{x_j\}$ or $x_i = x_j$.
	If $\{x_i\} < \{x_j\}$,
	then clearly $(\{x_i\}, -x_i, i) < (\{x_j\}, -x_j, j)$.
	If $x_i = x_j$,
	then $\{x_i\} = \{x_j\}$ and $-x_i = -x_j$,
	but since $i < j$, we again have
	$(\{x_i\}, -x_i, i) < (\{x_j\}, -x_j, j)$.
\end{proof}

Let $\bs\pi$ be a permutation of the $n + 2$ indices $0, 1, \ldots, n, n + 1$
such that for $0 \le i < j \le n + 1$,
$\pi_i > \pi_j$ if and only if
$(\{x_i\}, -x_i, i) < (\{x_j\}, -x_j, j)$,
which can be found by sorting in $O(n\log n)$ time.
For $0 \le i < j \le n + 1$,
denote by $\inv(i, j)$ the indicator variable which is $1$ when $\pi_i > \pi_j$
and is $0$ otherwise.
Then, by Lemma~\ref{lem:sorting},
$w(i, j) = \inv(i, j)$.

From this new perspective,
the problem of determining the maximum value $\sigma^*$,
such that
$B$ can be covered by the $n$ intervals in $\I$
by moving $n - \sigma^*$ intervals and keeping $\sigma^*$ intervals unmoved,
reduces to
the problem of
finding the maximum length $\sigma^* + 2$ of a sequence $\bs s$ of indices
increasing from $0$ to $n + 1$,
such that
the number of inversions $\pi_i > \pi_j$ between consecutive indices $i < j$ in $\bs s$
(which we will call \emph{drops})
is at most $d^* = n - \lfloor \ell_B \rfloor$.

\paragraph{Dynamic programming}

For $0 \le h \le i \le n$,
define $dp(h,i)$ as the pair $(d(h,i), p(h,i))$,
where $d(h,i)$ is the minimum number of drops
in any sequence $\bs s$ of $h + 1$ indices
$0 = s_0 < \ldots < s_h \le i$,
and $p(h,i)$ is the minimum value of $p = \pi_{s_h}$
over all such sequences $\bs s$ with exactly $d = d(h,i)$ drops.

The table $dp$ can be computed by dynamic programming.
For the base case when $h = 0$ or $h = i$,
let $dp(0,i) = (0, \pi_0)$
for $0 \le i \le n$,
and let $dp(i,i) = dp(i-1,i-1)\to i$
for $1 \le i \le n$,
where
$$
(d, p) \to i =
\begin{cases}
	(d, \pi_i) & \textrm{if $p < \pi_i$},\\
	(d + 1, \pi_i) & \textrm{otherwise}.
\end{cases}
$$
Then for $0 < h < i \le n$,
in particular, for $h = 1,\ldots,n - 1$ and $i = h+1,\ldots,n$,
we can compute $dp(h,i)$ using the recurrence
$$
dp(h,i) = \min\{\, dp(h, i-1),\, dp(h-1, i-1) \to i \,\}.
$$

Note that $\sigma^*$ is the maximum $h$
such that the $d$-part of
$dp(h, n)\to n+1$
is at most $d^*$,
which can be found in $O(n \sigma^*)$ time by a sequential search
for increasing values of $h$.

We can also find $\tau^* = n - \sigma^*$ in $O(n \tau^*)$ time by another sequential search,
for increasing values of $t = i - h$ instead of $h$.
First compute $dp(h,i)$ for the base case when $h = 0$ or $h = i$ as before,
then compute $dp(h,h+t)$
for $t = 1, \ldots, n - 1$ and $h = 1,\ldots,n - t$.
Note that $\tau^*$ is the minimum $t$
such that the $d$-part of
$dp(n-t,n)\to n+1$
is at most $d^*$.

Thus we have an algorithm for \textsc{Cover$^*$}
running in $O(n\log n + n\min\{\sigma^*,\tau^*\})$ time.

\paragraph{Lagrangian relaxation}

We next present an algorithm for \textsc{Cover$^*$} running in $O(n\log^2 n)$ time.
To do this, we first redesign the dynamic programming algorithm,
then speed it up using Lagrangian relaxation.
The technique of Lagrangian relaxation is also known as \textbf{Aliens trick}
in the competitive programming community.
Our use of this trick here is inspired
by a related problem\footnote{The problem can be translated to the following:
Given a permutation $\bs\pi$ of $[1,n]$,
compute for each $0 \le d < n$
the maximum length of a sequence $\bs s$ of indices
such that the number of inversions $\pi_i > \pi_j$ between consecutive indices $i < j$ in $\bs s$
is at most $d$.
One of the solutions by the problem setters
uses Aliens trick and runs in $O(n\sqrt n\log n)$ time.}
created by Compton and Qi~\cite{CQ21}.

For $0 \le d \le i \le n + 1$,
let $\sigma(d,i)$ be the maximum $h$
such that there exists a sequence $\bs s$ of $h + 2$ indices
increasing from $0$ to $i$, $0 = s_0 < \ldots < s_{h + 1} = i$, with at most $d$ drops.
Let $\sigma(d,i) = -\infty$ if such a sequence does not exist.
For convenience, also define $\sigma(d, i) = -\infty$ for $d > i$.
Our goal is to get $\sigma^* = \sigma(d^*, n + 1)$,
where $d^* = n - \lfloor \ell_B \rfloor \le n$.

We can compute $\sigma^*$ by dynamic programming.
For $d = 0$,
$\sigma(0, i)$
is $2$ less than the length of a longest increasing subsequence of $\pi_0\ldots \pi_i$
starting at $\pi_0$ and ending at $\pi_i$, or $-\infty$ if it does not exist.
In particular, $\sigma(0, 0) = -1$.
With the help of some data structures such as balanced search trees,
$\{ \sigma(0, i) \mid 0 \le i \le n + 1 \}$
can be computed in $O(n\log n)$ time.
If $d^* = 0$,
then we already have $\sigma^* = \sigma(0, n+1)$.
In the following, we assume that $d^* \ge 1$.

For $d = 1,\ldots,d^*$,
we can compute $\{ \sigma(d, i) \mid d \le i \le n + 1 \}$
from $\{ \sigma(d - 1, i) \mid d - 1 \le i \le n + 1 \}$.
For the base case,
$\sigma(d, 0) = \sigma(d - 1, 0)$.
For $1 \le j \le n + 1$,
$$
\sigma(d, j) = \max\big\{\;
\sigma(d - 1, j),\;
\max\{\, \sigma(d - \inv(i, j), i) + 1 \mid 0 \le i < j \,\}
\;\big\}.
$$
With balanced search trees, this takes $O(n \log n)$ time for each $d$.
Thus we can get $\sigma^* = \sigma(d^*, n + 1)$ in $O(d^* n \log n)$ time.

We next speed up the algorithm using Aliens trick,
which depends on a property of $\sigma$ stated in the following lemma.
We say that a function $f:\mathbb{Z}\to\mathbb{Z}$
is \emph{concave} for $a \le x \le b$
if $f(x + 1) - f(x) \le f(x) - f(x - 1)$ for $a < x < b$.

\begin{lemma}\label{lem:concave}
	$\sigma(d, n + 1)$ is concave for $1 \le d \le n + 1$.
\end{lemma}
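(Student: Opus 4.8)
The plan is to prove concavity of $d \mapsto \sigma(d, n+1)$ by exhibiting the combinatorial structure behind the "drops vs. length" trade-off and showing it obeys a Monge-type / matroid-exchange property. Let me think about what $\sigma(d, n+1)$ really counts.

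$\sigma(d, n+1) + 2$ is the maximum length of a sequence $0 = s_0 < s_1 < \cdots < s_{h+1} = n+1$ of indices with at most $d$ drops, where a drop is an inversion $\pi_{s_k} > \pi_{s_{k+1}}$ between consecutive chosen indices. So I'm choosing a subsequence of positions, and I pay one unit per adjacent inversion among the chosen positions; I want the longest subsequence (fixed to start at $0$ and end at $n+1$) whose total inversion-cost is at most $d$. The function records, as a budget $d$ on cost increases, how much longer a sequence I can afford.

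Here is the structure of my argument.

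First, I would reformulate via a single optimization over all subsequences. For a fixed subsequence $\bs s$ let $L(\bs s)$ be its length (minus 2) and $D(\bs s)$ its number of drops. Then $\sigma(d, n+1) = \max\{ L(\bs s) : D(\bs s) \le d\}$. I want to show this is concave in $d$. The standard route is to prove that the set of achievable pairs $\{(D(\bs s), L(\bs s))\}$ has an upper-left envelope that is concave, equivalently that there is no "gap-filling" failure: whenever the optimal length jumps, increasing the budget by one more always yields a non-increasing marginal gain.

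Second, and this is the crux, I would establish the exchange/interleaving property directly. The plan is: take an optimal sequence $\bs s$ achieving $\sigma(d-1, n+1)$ and an optimal sequence $\bs t$ achieving $\sigma(d+1, n+1)$, and show that one can construct two sequences $\bs u, \bs v$ with $D(\bs u) + D(\bs v) \le D(\bs s) + D(\bs t)$ and $L(\bs u) + L(\bs v) \ge L(\bs s) + L(\bs t)$, where $\bs u$ uses at most $d$ drops and $\bs v$ uses at most $d$ drops. This would give
$$
\sigma(d-1,n+1) + \sigma(d+1,n+1) \le 2\,\sigma(d,n+1),
$$
which is exactly concavity. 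The construction I would attempt is a merge-and-split argument on the two index sequences, viewed as lattice paths or as two monotone selections from $\{0,\dots,n+1\}$; at each shared or crossing position one redistributes the chosen indices between $\bs u$ and $\bs v$ so as to balance the drop counts. The drops behave additively along the sequence, which is what makes such a swapping argument plausible.

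The hard part will be the exchange step: controlling how drops (adjacent inversions) transform when two subsequences are interleaved and re-split, since the drop of a pair $(s_k, s_{k+1})$ depends on which indices end up adjacent after the surgery, not just on the multiset of chosen indices. I expect to need a careful crossing/uncrossing argument — pair up the two sequences so that between consecutive common "anchor" points they run in parallel, and argue that swapping the tails at an anchor point preserves the total length while not increasing the total number of drops (because $\inv$ is consistent along any increasing index sequence). An alternative, possibly cleaner, route I would keep in reserve is to prove concavity through the $dp$ recurrence itself: show by induction that each single-budget update $\sigma(d,\cdot)$ is obtained from $\sigma(d-1,\cdot)$ by a max-plus convolution against a "matrix" that is totally monotone, and invoke the fact that such convolutions preserve concavity. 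If the underlying transfer matrix is Monge, concavity of $\sigma(\cdot, n+1)$ follows from standard SMAWK-style monotonicity theory, and this is likely the smoothest way to discharge the obstacle.
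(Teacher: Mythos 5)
Your proposal identifies the right target inequality, $\sigma(d-1,n+1)+\sigma(d+1,n+1)\le 2\sigma(d,n+1)$, but it stops exactly where the difficulty begins: the merge-and-split (uncrossing) step is stated as a plan, not carried out, and you yourself flag it as the hard part. The obstacle is real. The two optimal sequences for budgets $d-1$ and $d+1$ are only guaranteed to share the indices $0$ and $n+1$; if you splice them at any other point, you create a new adjacent pair whose inversion status is not determined by either original sequence, so the total number of drops can strictly increase under the surgery, and the inequality $D(\bs u)+D(\bs v)\le D(\bs s)+D(\bs t)$ is not secured. Your fallback route is likewise not a proof: the transition $\sigma(d,j)=\max\{\sigma(d-1,j),\max_i\{\sigma(d-\inv(i,j),i)+1\}\}$ mixes the layers $d$ and $d-1$ according to $\inv(i,j)$, so it is not a max-plus convolution against a fixed kernel in the budget variable, and no Monge/total-monotonicity property is exhibited or verified. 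As it stands, neither route establishes the lemma.

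For comparison, the paper takes an entirely different and more structural path. It works with the auxiliary table $dp(h,i)=(d(h,i),p(h,i))$ (minimum drops, then minimum last value, for a length-$(h{+}1)$ sequence ending by index $i$) and observes that $\sigma(d,n+1)+1$ equals the number of pairs in $DP_n=\{dp(h,n)\}$ that are less than $(d,\pi_{n+1})$ in lexicographic order. Concavity then reduces to showing that the counts of pairs $(d,p)\in DP_n$ with $p<\pi_{n+1}$, and with $p>\pi_{n+1}$, are each nonincreasing in $d$; this in turn follows from a downward-closure property (Lemma~\ref{lem:inclusion}: if $(d+1,p)\in DP_i$ and $(d,p)\ge(0,\pi_0)$ then $(d,p)\in DP_i$), proved by induction on $i$ through the recurrence. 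If you want to salvage your approach, you would need either to restrict attention to optimal sequences that can be uncrossed at a common index with controlled drop counts, or to abandon the exchange argument in favor of some structural invariant of the DP of the kind the paper establishes.
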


The trick is to assign a penalty $\lambda$ for each drop.
Instead of computing $\sigma(d, i)$ sequentially for increasing $d$
from $0$ to $d^*$ as in the dynamic programming algorithm,
we will compute $\sigma(d, i) - d\lambda$,
and get $\sigma^* = \sigma(d^*, n + 1)$ indirectly
as $\sigma(d^*, n + 1) - d^*\lambda + d^*\lambda$,
by a binary search for a suitable $\lambda$.

For any $\lambda \ge 0$, and for $1 \le i \le n + 1$,
let
$$
\sigma_\lambda(i) = \max\{\, \sigma(d, i) - d\lambda \mid 1 \le d \le i \,\},
$$
and correspondingly,
let $d_\lambda(i)$ be the minimum $d$, $1 \le d \le i$,
such that
$\sigma(d, i) - d\lambda = \sigma_\lambda(i)$.

Consider $\sigma(d, n + 1) - d\lambda$ as a function of $d$ and $\lambda$.
Geometrically,
it can be viewed as the dot product of two vectors
$(d, \sigma(d, n + 1))$ and $(-\lambda, 1)$.
As $\lambda$ decreases from $\infty$ to $0$,
the vector $(-\lambda, 1)$ rotates from $-\vec x$ direction to $+\vec y$ direction.
For any fixed $\lambda$, the dot products for different $d$
are proportional to the projection lengths of 
different vectors $(d, \sigma(d, n + 1))$ onto the same vector $(-\lambda, 1)$.

For each $d$, $1 \le d \le n$,
equating the two dot products for $d$ and $d + 1$
yields an equation of $\lambda$,
$$
\sigma(d, n + 1) - d\lambda = \sigma(d + 1, n + 1) - (d + 1)\lambda,
$$
which has an integer solution,
$$
\lambda_d = \sigma(d + 1, n + 1) - \sigma(d, n + 1).
$$

By Lemma~\ref{lem:concave},
$\lambda_d$ is decreasing for $1 \le d \le n$.
Since any sequence with at most $d$ drops is also a sequence with at most $d + 1$ drops,
we have
$\sigma(d, n + 1) \le \sigma(d + 1, n + 1)$
and hence
$\lambda_d \ge 0$.
Also, since $\sigma(d, n + 1) + 2 \ge 2$ for $d \ge 1$, and $\sigma(d + 1, n + 1) + 2 \le n + 2$,
we have
$\lambda_d \le n$.
Thus
$$
0 \le \lambda_n \le \ldots \le \lambda_1 \le n.
$$

Thus for each $d$, $1 \le d \le n + 1$,
there is a non-empty range of integer values for $\lambda$,
between $0$ and $n$,
such that
$$
\sigma_\lambda(n + 1)
=
\sigma(d, n + 1) - d\lambda.
$$
Specifically, these ranges are $\lambda_1 \le \lambda \le n$ for $d = 1$,
$\lambda_d \le \lambda \le \lambda_{d - 1}$ for $2 \le d \le n$,
and $0 \le \lambda \le \lambda_n$ for $d = n + 1$.

Correspondingly,
$d_\lambda(n + 1)$ is decreasing for $0 \le \lambda \le n$:
it is equal to
$n + 1$ for $0 \le \lambda < \lambda_n$,
to $d = n,\ldots,2$ for $\lambda_d \le \lambda < \lambda_{d - 1}$,
and
to $1$ for $\lambda_1 \le \lambda \le n$.
Some of these ranges may be empty,
so $d_\lambda(n + 1)$ may not assume every integer value from $n + 1$ down to $1$,
as $\lambda$ increases from $0$ to $n$.
But we always have $d_\lambda(n + 1) = 1$ for $\lambda = n$.
Recall our assumption that $d^* \ge 1$.
Thus $d_\lambda(n + 1) \le d^*$ for $\lambda = n$.

Let $\lambda^*$ be
the smallest integer $\lambda$, $0 \le \lambda \le n$,
such that
$d_\lambda(n + 1) \le d^*$.
Then $\lambda^* = \lambda_{d^*}$,
and hence $\sigma_{\lambda^*}(n + 1) = \sigma(d^*, n + 1) - d^* \lambda^*$.
Then $\sigma^* = \sigma(d^*, n + 1) = \sigma_{\lambda^*}(n + 1) + d^* \lambda^*$.

It remains to find $\lambda^*$.
Recall that \{$\sigma(0, i) \mid 0 \le i \le n + 1$\}
can be computed in $O(n \log n)$ time,
and subsequently
\{$\sigma(1, i) \mid 1 \le i \le n + 1$\}
can be computed from 
\{$\sigma(0, i) \mid 0 \le i \le n + 1$\}
in $O(n \log n)$ time.
After these two preliminary steps,
we can find $\lambda^*$ by a binary search,
which amounts to first computing $\sigma_\lambda(n + 1)$ and $d_\lambda(n + 1)$,
and then checking whether $d_\lambda(n + 1) \le d^*$,
for $O(\log n)$ different $\lambda$ between $0$ and $n$.

For each such $\lambda$,
we can compute the $n + 1$ pairs $(\sigma_\lambda(i), d_\lambda(i))$, $1 \le i \le n + 1$,
sequentially, by dynamic programming.
For the base case, let $(\sigma_\lambda(1), -d_\lambda(1)) = (\sigma(1, 1) - \lambda, -1)$.
Then, for $j = 2,\ldots,n+1$,
use the recurrence
$$
(\sigma_\lambda(j), -d_\lambda(j)) = \max\big\{\;
(\sigma(1, j) - \lambda, -1),\;
\max\{\,
( \sigma_\lambda(i) + 1 - \inv(i, j)\lambda,\, -d_\lambda(i) - \inv(i, j) )
\mid
1 \le i < j
\,\}
\;\big\}.
$$
Again, with balanced search trees, this can be done in $O(n \log n)$ time for each $\lambda$.
Thus the overall running time is $O(n \log^2 n)$.

\subsubsection{Proof of Lemma~\ref{lem:concave}}

Recall $dp(h, i)$ and $(d,p)\to i$ defined earlier.
For $0 \le i \le n$, let
$$
DP_i = \{\, dp(h, i) \mid 0 \le h \le i \,\}.
$$

We first prove some easy lemmas.
The following lemma is about the lexicographic order of pairs:

\begin{lemma}\label{lem:pair}
	For all integers $a$, $b$, $c$, and $d$,
	$(a, b) < (c, d) \iff (a - 1, b) < (c - 1, d)$.
\end{lemma}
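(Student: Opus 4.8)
The plan is to unfold the definition of the lexicographic order on pairs and observe that shifting both first coordinates by the same integer leaves every relevant comparison intact. Recall that, by definition, $(a,b) < (c,d)$ holds precisely when either $a < c$, or else $a = c$ together with $b < d$. The entire content of the lemma is that this disjunction is unaffected by replacing $a$ with $a-1$ and $c$ with $c-1$.

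First I would record the two elementary facts about integers that drive everything: $a < c \iff a - 1 < c - 1$ and $a = c \iff a - 1 = c - 1$. Both hold because subtracting the same integer from both sides of a strict inequality or an equality preserves it, and the operation is reversible, so each is a genuine equivalence rather than a one-way implication. The second coordinates $b$ and $d$ are never touched, so the clause $b < d$ appears verbatim on both sides of the claimed biconditional.

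Then I would simply substitute these equivalences into the definition. The condition ``$a < c$, or ($a = c$ and $b < d$)'' that characterizes $(a,b) < (c,d)$ becomes, clause by clause, the condition ``$a - 1 < c - 1$, or ($a - 1 = c - 1$ and $b < d$)'', which is exactly the characterization of $(a-1,b) < (c-1,d)$. Since each side reduces to the same propositional combination of the same atomic statements, the two strict inequalities are logically equivalent, establishing the \emph{if and only if}.

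There is no real obstacle here: the statement is a routine consequence of the definition of lexicographic order, and the only point requiring any care is to phrase the two integer facts as reversible equivalences, so that both directions of the $\iff$ follow at once. The lemma is isolated only because it will be applied repeatedly when manipulating the $dp$ pairs under the $(d,p)\to i$ operation in the proof of Lemma~\ref{lem:concave}, where the first coordinate (the drop count) is routinely decremented while the comparison must be preserved.
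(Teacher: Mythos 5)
Your proposal is correct and follows essentially the same route as the paper: unfold the lexicographic order into the disjunction ``$a<c$, or $a=c$ and $b<d$,'' note that subtracting $1$ from both first coordinates preserves both $<$ and $=$ as reversible equivalences, and conclude. No gaps.
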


\begin{proof}
	In lexicographic order, $(a, b) < (c, d)$ is equivalent to
	$a < c$ or $a = c$ and $b < d$,
	which is equivalent to
	$a - 1 < c - 1$ or $a - 1 = c - 1$ and $b < d$,
	and hence $(a - 1, b) < (c - 1, d)$.
\end{proof}

The next two lemmas are on basic properties of $(d, p)\to i$:

\begin{lemma}\label{lem:dptoi}
	For $1 \le i \le n$, the $p$-part of $(d, p) \to i$ is $\pi_i$,
	and $(d, p) \to i$ is the smallest pair
	greater than $(d, p)$ with $p$-part equal to $\pi_i$.
\end{lemma}

\begin{proof}
	Recall that $(d, p) \to i$ yields either $(d, \pi_i)$, if $p < \pi_i$,
	or $(d + 1, \pi_i)$, otherwise.
	The greedy choice of $d$ or $d + 1$ ensures that it is the smallest
	among all pairs $(a, \pi_i)$ that are greater than $(d, p)$.
\end{proof}

\begin{lemma}\label{lem:dptoi1}
	For $1 \le i \le n$,
	if $(d, p) < (d', p')\to i$,
	then $(d - 1, p) < (d' - 1, p')\to i$.
\end{lemma}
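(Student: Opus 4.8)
The plan is to reduce the claim to Lemma~\ref{lem:pair} via a single structural observation about the operation $\to i$. The key fact I would establish first is that decreasing the first argument of $\to i$ by one shifts the result down by exactly one in its $d$-part, while leaving its $p$-part equal to $\pi_i$. Concretely, writing $(d', p')\to i = (a, \pi_i)$, I claim $(d'-1, p')\to i = (a-1, \pi_i)$. This is a two-case verification against the definition of $\to i$ used in this section: if $p' < \pi_i$, then $(d', p')\to i = (d', \pi_i)$ and $(d'-1, p')\to i = (d'-1, \pi_i)$; otherwise $(d', p')\to i = (d'+1, \pi_i)$ and $(d'-1, p')\to i = (d', \pi_i)$. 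In both branches the $d$-part drops by one and the $p$-part stays $\pi_i$, confirming the claim.

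With this in hand, the hypothesis $(d, p) < (d', p')\to i$ becomes $(d, p) < (a, \pi_i)$. I would then invoke Lemma~\ref{lem:pair} directly, instantiated at the two pairs $(d, p)$ and $(a, \pi_i)$: it yields $(d, p) < (a, \pi_i) \iff (d-1, p) < (a-1, \pi_i)$. The right-hand side is precisely $(d-1, p) < (d'-1, p')\to i$ by the structural observation, which is the desired conclusion. So the lemma follows immediately once that observation is recorded.

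I do not expect any real obstacle here; the statement is essentially a compatibility between the uniform downward shift induced by $\to i$ and the shift appearing in Lemma~\ref{lem:pair}. The only point requiring a moment's care is to notice that the shift in the $d$-part is exactly one in both branches of the definition of $\to i$ --- in particular, the branch boundary (whether $p' < \pi_i$) is unaffected by replacing $d'$ with $d'-1$, since that boundary depends only on $p'$ and $\pi_i$, not on the first argument. Once this is observed, no case analysis on the relative order of $(d,p)$ and $(d',p')\to i$ is needed, and the result is a one-line consequence of Lemma~\ref{lem:pair}.
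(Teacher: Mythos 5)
Your proposal is correct and follows essentially the same route as the paper: both establish that $(d'-1,p')\to i$ equals $(a-1,\pi_i)$ when $(d',p')\to i = (a,\pi_i)$ via the same two-case check on whether $p' < \pi_i$, and then conclude by Lemma~\ref{lem:pair}. No differences worth noting.
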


\begin{proof}
	If $p' < \pi_i$,
	then $(d', p') \to i = (d', \pi_i)$,
	and $(d' - 1, p') \to i = (d' - 1, \pi_i)$.
	If $p' \ge \pi_i$,
	then $(d', p') \to i = (d' + 1, \pi_i)$,
	and $(d' - 1, p') \to i = (d', \pi_i)$.
	Write $(d', p')\to i$ as $(a, \pi_i)$.
	Then $(d' - 1, p') \to i = (a - 1, \pi_i)$.
	By Lemma~\ref{lem:pair},
	if $(d, p) < (a, \pi_i)$,
	then $(d - 1, p) < (a - 1, \pi_i)$.
\end{proof}

The next few lemmas are on basic properties of $dp(h,i)$ and $DP_i$:

\begin{lemma}\label{lem:dph}
	For $0 \le i \le n$,
	$dp(h, i)$ is strictly increasing for $0 \le h \le i$.
\end{lemma}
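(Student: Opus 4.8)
The plan is to establish strict monotonicity by verifying the single-step inequality $dp(h,i) < dp(h+1,i)$ for every $0 \le h < i$, where the comparison is lexicographic. The natural tool is a \emph{prefix argument}: take an optimal sequence witnessing $dp(h+1,i)$ and delete its last index, obtaining a shorter sequence that certifies an upper bound on $dp(h,i)$. The whole proof then reduces to comparing this upper bound against $dp(h+1,i)$ itself.

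First I would record a bridging claim that converts the existence of a sequence into a lexicographic bound on $dp$: if some valid sequence of $h+1$ indices ends at an index $k \le i$ and has exactly $D$ drops, then $dp(h,i) \le (D,\pi_k)$. This follows directly from the definition of $dp(h,i)=(d(h,i),p(h,i))$ as a minimum-$d$-then-minimum-$p$ pair, since either $d(h,i) < D$, settling the inequality in the first coordinate, or $d(h,i)=D$, in which case $p(h,i) \le \pi_k$ by minimality of $p(h,i)$.

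Next I would take an optimal sequence $\bs s = (s_0,\dots,s_{h+1})$ realizing $dp(h+1,i) = (d(h+1,i), p(h+1,i))$, so it has $d(h+1,i)$ drops and $\pi_{s_{h+1}} = p(h+1,i)$. Write $\delta = \inv(s_h,s_{h+1}) \in \{0,1\}$ for the drop contributed by the last edge. The prefix $(s_0,\dots,s_h)$ is a valid sequence of $h+1$ indices ending at $s_h \le i$ (since $s_h < s_{h+1} \le i$) with exactly $d(h+1,i)-\delta$ drops, so the claim gives $dp(h,i) \le (d(h+1,i)-\delta,\ \pi_{s_h})$. A two-case analysis on $\delta$ finishes the argument: if $\delta=1$ the first coordinate of the bound is strictly smaller than $d(h+1,i)$; if $\delta=0$ then $\inv(s_h,s_{h+1})=0$ forces $\pi_{s_h} < \pi_{s_{h+1}} = p(h+1,i)$, so the first coordinates agree while the second is strictly smaller. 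Either way $dp(h,i) < dp(h+1,i)$.

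I expect the only delicate point to be the bridging claim, since it is where the two-level ``min $d$, then min $p$'' structure of $dp$ must be handled correctly; everything else is bookkeeping. I note that all sequences here are finite in the relevant range, because for $0 \le h \le i$ the sequence $s_k = k$ is always valid, so no $(\infty,0)$ entries arise and the base case $h=0$ is subsumed by the general step rather than needing separate treatment. This self-contained combinatorial route avoids the recurrence, which would be messier because the operator $(d,p)\to i$ is only weakly, not strictly, monotone in $(d,p)$.
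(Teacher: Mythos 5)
Your proposal is correct and follows essentially the same route as the paper: take an optimal witness for $dp(h+1,i)$, delete its last index to get a witness bounding $dp(h,i)$, and split on whether the final edge contributes a drop. Your explicit ``bridging claim'' and the observation that no $(\infty,0)$ entries arise in this (cover) setting are just details the paper leaves implicit.
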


\begin{proof}
	It suffices to show that
	$dp(h, i) < dp(h + 1, i)$
	for $0 \le h < i$.
	For any sequence $\bs s$ of $h + 2$ indices,
	$0 = s_0 < \ldots < s_h < s_{h+1}\le i$,
	with exactly $d(h+1, i)$ drops and with
	$\pi_{s_{h+1}} = p(h+1, i)$,
	let $\bs s'$ be the subsequence of the first $h + 1$ indices in $\bs s$,
	$0 = s_0 < \ldots < s_h \le i$.
	Then either
	$\pi_{s_h} < \pi_{s_{h+1}}$,
	and $\bs s$ has the same number of drops as $\bs s'$,
	or
	$\pi_{s_h} > \pi_{s_{h+1}}$,
	and $\bs s$ has one more drop than $\bs s'$.
\end{proof}

\begin{lemma}\label{lem:dp0i}
	For $0 \le i \le n$,
	$dp(0, i) = (0, \pi_0)$
	is the smallest pair in $DP_i$,
	and is the only pair in $DP_i$ with $p$-part equal to $\pi_0$.
\end{lemma}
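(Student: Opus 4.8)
The plan is to treat the lemma's two claims separately, since each follows quickly from material already in place.

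For the claim that $dp(0,i) = (0,\pi_0)$ is the smallest pair in $DP_i$, I would invoke Lemma~\ref{lem:dph} directly. That lemma asserts that $dp(h,i)$ is strictly increasing in $h$ for $0 \le h \le i$; since $DP_i$ is by definition exactly the collection $\{\, dp(h,i) \mid 0 \le h \le i \,\}$, its minimum element is the first term of this increasing sequence, namely $dp(0,i) = (0,\pi_0)$.

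For the claim that $dp(0,i)$ is the only pair in $DP_i$ whose $p$-part equals $\pi_0$, I would show that every $dp(h,i)$ with $1 \le h \le i$ has $p$-part different from $\pi_0$. Fix such an $h$, and let $\bs s$ be a sequence $0 = s_0 < s_1 < \ldots < s_h \le i$ attaining $d(h,i)$ drops with $\pi_{s_h} = p(h,i)$. Because $h \ge 1$, we have $s_h > s_0 = 0$, so $s_h \ne 0$; and because $\bs\pi$ is a permutation of $\{0,1,\ldots,n+1\}$, hence injective, it follows that $p(h,i) = \pi_{s_h} \ne \pi_0$. Since $dp(0,i) = (0,\pi_0)$ does have $p$-part $\pi_0$, uniqueness follows.

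I do not anticipate a genuine obstacle here: both parts are direct consequences of the definitions together with the monotonicity recorded in Lemma~\ref{lem:dph}. The only step needing explicit mention is that every admissible sequence is forced to begin with $s_0 = 0$, which makes $s_h = 0$ impossible once $h \ge 1$; injectivity of $\bs\pi$ then separates $\pi_0$ from every other $p$-part.
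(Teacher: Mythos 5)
Your proof is correct and follows essentially the same route as the paper: the first claim via the monotonicity of Lemma~\ref{lem:dph}, and the second via the observation that every admissible sequence starts at $s_0 = 0$, so $s_h \neq 0$ once $h \ge 1$ and hence $\pi_{s_h} \neq \pi_0$ by injectivity of the permutation $\bs\pi$. The only difference is that you make the appeal to injectivity explicit, which the paper leaves implicit.
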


\begin{proof}
	By Lemma~\ref{lem:dph},
	$dp(0, i)$ is the smallest pair in $DP_i$.
	In the definition of $dp(h, i)$,
	the sequence $\bs s$ of $h + 1$ distinct indices $s_0,\ldots,s_h$
	always starts with $s_0 = 0$,
	so the last index $s_h$ can be $0$ only if $h = 0$.
	This implies that the $p$ part of $dp(h, i)$ is equal to $\pi_0$ for $h = 0$,
	and is not equal to $\pi_0$ for $0 < h \le i$.
	Thus $dp(0, i) = (0, \pi_0)$ is the only pair in $DP_i$ with $p$-part equal to $\pi_0$.
\end{proof}

\begin{lemma}\label{lem:dpii}
	For $0 \le i \le n$,
	$dp(i, i)$ is the largest pair in $DP_i$,
	and its $p$-part is equal to $\pi_i$.
	For $1 \le i \le n$, $dp(i, i)$ is the smallest pair
	greater than $dp(i - 1, i - 1)$ with $p$-part equal to $\pi_i$.
\end{lemma}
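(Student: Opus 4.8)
The plan is to settle each of the three assertions by reducing to the lemmas already established for $dp(h,i)$ and the operation $(d,p)\to i$, while keeping track of the boundary case $i = 0$, where the defining identity $dp(i,i) = dp(i-1,i-1)\to i$ does not apply.

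First I would prove that $dp(i,i)$ is the largest pair in $DP_i$. This follows immediately from Lemma~\ref{lem:dph}: since $dp(h,i)$ is strictly increasing for $0 \le h \le i$, the term at $h = i$ is strictly the largest element of $DP_i = \{\, dp(h,i) \mid 0 \le h \le i \,\}$.

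Next I would show that the $p$-part of $dp(i,i)$ equals $\pi_i$. The cleanest argument is combinatorial and uniform in $i$: any witnessing sequence $\bs s$ for $dp(i,i)$ consists of $i+1$ strictly increasing indices $0 = s_0 < \ldots < s_i \le i$ drawn from $\{0,1,\ldots,i\}$, so necessarily $s_h = h$ for every $h$, forcing $s_i = i$ and hence $p(i,i) = \pi_{s_i} = \pi_i$. (For $1 \le i \le n$ one could instead invoke the defining identity $dp(i,i) = dp(i-1,i-1)\to i$ together with Lemma~\ref{lem:dptoi}, which says the $p$-part of $(d,p)\to i$ is $\pi_i$; the case $i = 0$ is then read off from $dp(0,0) = (0,\pi_0)$.)

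Finally, for $1 \le i \le n$, the third assertion is a direct application of Lemma~\ref{lem:dptoi} to the defining identity $dp(i,i) = dp(i-1,i-1)\to i$. That lemma characterizes $(d,p)\to i$ as the smallest pair greater than $(d,p)$ whose $p$-part equals $\pi_i$; taking $(d,p) = dp(i-1,i-1)$ yields exactly that $dp(i,i)$ is the smallest pair greater than $dp(i-1,i-1)$ with $p$-part equal to $\pi_i$. I do not anticipate any substantial obstacle here; the only care required is the separation of the base case $i = 0$, since both the recurrence-based argument for the $p$-part and the entire third assertion rest on the identity $dp(i,i) = dp(i-1,i-1)\to i$, which is stated only for $i \ge 1$.
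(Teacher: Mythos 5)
Your proposal is correct and follows essentially the same route as the paper: Lemma~\ref{lem:dph} for maximality, and the identity $dp(i,i)=dp(i-1,i-1)\to i$ together with Lemma~\ref{lem:dptoi} for the $p$-part and the minimality claim. Your alternative combinatorial argument for the $p$-part (a strictly increasing sequence of $i+1$ indices in $\{0,\ldots,i\}$ must end at $i$) is a valid, uniform replacement for the paper's case split between $i=0$ and $i\ge 1$, but it is only a cosmetic variation.
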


\begin{proof}
	By Lemma~\ref{lem:dph},
	$dp(i, i)$ is the largest pair in $DP_i$, for $0 \le i \le n$.
	Recall that
	$dp(0, 0) = (0, \pi_0)$,
	and
	$dp(i, i) = dp(i - 1, i - 1) \to i$ for $1 \le i \le n$.
	By Lemma~\ref{lem:dptoi},
	the $p$-part of $dp(i, i)$ is $\pi_i$,
	and $dp(i, i)$ is the smallest
	among all pairs $(a, \pi_i)$ that are greater than $dp(i - 1, i - 1)$,
	for $1 \le i \le n$.
\end{proof}

\begin{lemma}\label{lem:dph_}
	For $0 < h \le i \le n$,
	if $dp(h, i) > (d, p) \in DP_i$,
	then $dp(h - 1, i) \ge (d, p)$.
\end{lemma}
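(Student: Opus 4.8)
The plan is to reduce the statement to the strict monotonicity of $dp(\cdot, i)$ established in Lemma~\ref{lem:dph}. The crucial observation is that the hypothesis $(d, p) \in DP_i$ is exactly what makes the claim work: by the definition of $DP_i$, there is some index $h'$ with $0 \le h' \le i$ such that $(d, p) = dp(h', i)$. It is worth noting that for a pair outside $DP_i$ the conclusion could fail, since such a pair might lie strictly between the consecutive table entries $dp(h - 1, i)$ and $dp(h, i)$; membership in $DP_i$ rules this out, because $h \mapsto dp(h, i)$ is injective and its image is precisely $DP_i$.

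With this identification in hand, the argument is a twofold application of Lemma~\ref{lem:dph}. First, since $dp(\cdot, i)$ is strictly increasing on $\{0, \ldots, i\}$, the hypothesis $dp(h, i) > (d, p) = dp(h', i)$ forces $h > h'$. Because $h$ and $h'$ are integers, this gives $h - 1 \ge h'$. Second, applying monotonicity again, now only in its weak form since $h - 1 \ge h'$, we obtain $dp(h - 1, i) \ge dp(h', i) = (d, p)$, which is exactly the desired conclusion. Note that $h > 0$ ensures $h - 1 \ge 0$, so $dp(h - 1, i)$ is a well-defined entry of $DP_i$.

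I do not anticipate any real obstacle here, as the entire content is the order-preserving and injective nature of the map $h \mapsto dp(h, i)$. The only point that requires care is to invoke the hypothesis $(d, p) \in DP_i$ at the very start, rather than treating $(d, p)$ as an arbitrary pair; this is precisely the step that lets us convert the strict inequality between pairs into a strict inequality between the indices $h$ and $h'$, after which the monotonicity of $dp$ delivers the result mechanically.
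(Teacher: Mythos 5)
Your proof is correct and follows essentially the same route as the paper's: identify $(d,p)$ as $dp(h',i)$ for some $h'$ using the hypothesis $(d,p)\in DP_i$, deduce $h'<h$ (hence $h-1\ge h'$) from the strict monotonicity of Lemma~\ref{lem:dph}, and apply monotonicity once more to conclude $dp(h-1,i)\ge dp(h',i)=(d,p)$. No gaps.
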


\begin{proof}
	Suppose that $dp(h, i) > (d, p) \in DP_i$.
	Since $(d, p) \in DP_i$, we have $(d, p) = dp(h', i)$ for some $h'$, $0 \le h' \le i$.
	Then $dp(h', i) = (d, p) < dp(h, i)$.
	By Lemma~\ref{lem:dph}, it follows that $h' < h$, and hence $h - 1 \ge h'$.
	By Lemma~\ref{lem:dph} again,
	we have $dp(h - 1, i) \ge dp(h', i) = (d, p)$.
\end{proof}

We are now ready to prove Lemma~\ref{lem:concave} that
$\sigma(d, n + 1)$ is concave for $1 \le d \le n + 1$,
that is,
$\sigma(d + 1, n + 1) - \sigma(d, n + 1)$
is decreasing for $1 \le d \le n$.

Fix any $d$, where $1 \le d \le n + 1$.
From the definitions of $dp(h,i)$ for $0 \le h \le i \le n$
and $\sigma(d,i)$ for $0 \le d \le i \le n + 1$,
we can see that
$\sigma(d, n + 1)$
is the maximum $h$, $0 \le h \le n$,
such that
the $d$-part of $dp(h, n) \to n + 1$ is at most $d$.
By Lemma~\ref{lem:dph},
$\sigma(d, n + 1) + 1$
is equal to the number of distinct values for $h$, $0 \le h \le n$,
such that
the $d$-part of $dp(h, n) \to n + 1$ is at most $d$.
Note that the $d$-part of $dp(h, n) \to n + 1$ is at most $d$
if and only if $dp(h, n) < (d, \pi_{n + 1})$.
Then
$\sigma(d, n + 1) + 1$ is equal to
the number of pairs in $DP_n$ that are less than $(d, \pi_{n + 1})$.

Therefore,
for $1 \le d \le n$,
$\sigma(d + 1, n + 1) - \sigma(d, n + 1)$ is equal to
the number of pairs $(a,p) \in DP_n$ such that
$(d, \pi_{n + 1}) \le (a,p) < (d + 1, \pi_{n + 1})$.
Note that no pair $(a,p) \in DP_n$ can have $p = \pi_{n + 1}$.
For $1 \le d \le n + 1$, define
$k_<(d)$ (respectively, $k_>(d)$)
as the number of pairs $(d, p) \in DP_n$ with
$p < \pi_{n + 1}$ (respectively, $p > \pi_{n + 1}$).
Then $\sigma(d + 1, n + 1) - \sigma(d, n + 1) = k_>(d) + k_<(d + 1)$
for $1 \le d \le n$.

To show that
$\sigma(d + 1, n + 1) - \sigma(d, n + 1)$
is decreasing for $1 \le d \le n$,
it suffices to show that both $k_<(d)$ and $k_>(d)$ are decreasing for $1 \le d \le n + 1$. 
We need the following lemma:

\begin{lemma}\label{lem:inclusion}
	For $0 \le i \le n$,
	if $(d + 1, p) \in DP_i$
	and $(d, p) \ge (0, \pi_0)$,
	then $(d, p) \in DP_i$.
\end{lemma}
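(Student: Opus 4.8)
The plan is to induct on $i$, proving the equivalent statement that in each column (fixed $p$-part) the $d$-parts occurring in $DP_i$ form a contiguous range bounded below by $(0,\pi_0)$. The base case $i=0$ is vacuous, since $DP_0=\{(0,\pi_0)\}$ forces $d=-1$ in any hypothesis $(d+1,p)\in DP_0$, violating $(d,p)\ge(0,\pi_0)$. For the inductive step I would expand $(d+1,p)=dp(h,i)$ through the recurrence $dp(h,i)=\min\{dp(h,i-1),\,dp(h-1,i-1)\to i\}$ and split on whether $p\ne\pi_i$ (so $(d+1,p)$ is \emph{inherited}, equal to $dp(h,i-1)$, since only the term $dp(h-1,i-1)\to i$ can have $p$-part $\pi_i$) or $p=\pi_i$ (so $(d+1,p)$ lies in the \emph{run} of new pairs produced by $\to i$). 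A caution learned from small examples: the set of levels $h$ at which the $\to i$ term wins is \emph{not} monotone, so one cannot split $DP_i$ into a prefix of $DP_{i-1}$ followed by the run; the two cases must each be closed on their own.

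For the inherited case, note that $(d+1,p)\in DP_{i-1}$ survives into $DP_i$ exactly when $(d+1,p)<q\to i$, where $q$ is the predecessor of $(d+1,p)$ in the chain $DP_{i-1}$ (Lemma~\ref{lem:dph}, Lemma~\ref{lem:dph_}); indeed $q=dp(h-1,i-1)$. The induction hypothesis gives $(d,p)\in DP_{i-1}$, and I would apply the decrement Lemma~\ref{lem:dptoi1} to $(d+1,p)<q\to i$ to obtain $(d,p)<(q_d-1,q_p)\to i$, writing $q=(q_d,q_p)$. It then suffices to show that the predecessor $q'$ of $(d,p)$ in $DP_{i-1}$ satisfies $q'\ge(q_d-1,q_p)$, for then monotonicity of $\to i$ (a consequence of Lemma~\ref{lem:dptoi}) yields $(d,p)<(q_d-1,q_p)\to i\le q'\to i$, i.e.\ $(d,p)$ survives. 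The inequality $q'\ge(q_d-1,q_p)$ I would establish by a short case analysis on the position of $q$ relative to $(d,p)$ and $(d+1,p)$: in each case the induction hypothesis places the decremented pair $(q_d-1,q_p)$ weakly below $(d,p)$, hence weakly below its predecessor $q'$, the boundary subtlety $(q_d-1,q_p)<(0,\pi_0)$ being handled trivially since every pair of $DP_{i-1}$ is $\ge(0,\pi_0)$.

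For the run case ($p=\pi_i$) I would prove that the column $\{(e,\pi_i)\in DP_i\}$ is an integer interval, from two facts. First, every distinct $d$-value realised by the image $\{a\to i\mid a\in DP_{i-1}\}$ actually occurs in $DP_i$: for a value $e$, take the \emph{largest} $a=dp(g,i-1)$ with $a\to i=(e,\pi_i)$; its successor $b=dp(g+1,i-1)$ has strictly larger image, and using Lemma~\ref{lem:dptoi} (that $b\to i$ is the least pair above $b$ with $p$-part $\pi_i$) one checks $(e,\pi_i)<b$, so the $\to i$ term wins at level $g+1$ and $dp(g+1,i)=(e,\pi_i)$ (the top value being realised at level $i$ by Lemma~\ref{lem:dpii}). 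Second, these image $d$-values are contiguous: along the chain $DP_{i-1}$ the pair $a\to i$ is nondecreasing with $p$-part $\pi_i$, and for consecutive $a<a'$ the induction hypothesis forces $a'_d-a_d\le 1$ (and, when $a'_d=a_d+1$, forces $a'_p\le a_p$), so consecutive image $d$-parts differ by at most $1$. Finally I would check the lower endpoint: the minimum run value is $[\pi_0\ge\pi_i]$, and the hypothesis $(d,\pi_i)\ge(0,\pi_0)$ forces $d$ to be at least this minimum, so $(d,\pi_i)$ lies in the interval whenever $(d+1,\pi_i)$ does.

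I expect the run case to be the main obstacle, as it packages two separate contiguity arguments—survival of each image value, and absence of jumps in the image $d$-parts—both of which lean on Lemma~\ref{lem:dptoi} and the induction hypothesis; the non-monotonicity of the winning-level set means these cannot be short-circuited by a structural decomposition of $DP_i$.
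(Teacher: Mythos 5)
Your proof is correct, and its skeleton --- induction on $i$ with a case split on whether $p=\pi_i$ --- is exactly the paper's. Your inherited case is essentially the paper's second case: both extract $(d+1,p)=dp(h,i-1)<dp(h-1,i-1)\to i$ from the recurrence, decrement that inequality via Lemma~\ref{lem:dptoi1}, and then bound the predecessor of $(d,p)$ in $DP_{i-1}$ from below by the decremented $q$ using the induction hypothesis together with Lemma~\ref{lem:dph_}, with the same split on whether the decremented pair is still $\ge(0,\pi_0)$. Where you genuinely diverge is the run case $p=\pi_i$. The paper argues pointwise: for each target $(c,\pi_i)$ with $(0,\pi_0)\le(c,\pi_i)<dp(i,i)$ it takes $h$ minimal with $(c,\pi_i)\le dp(h,i-1)$ and shows that $dp(h-1,i-1)$ falls in $[(c-1,\pi_i),(c,\pi_i))$, so that $dp(h-1,i-1)\to i=(c,\pi_i)$ wins the recurrence at level $h$. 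You instead describe the entire $\pi_i$-column of $DP_i$ as the image of $DP_{i-1}$ under $\to i$, show each image value survives via its largest preimage, and derive contiguity of the image from the step-size bound on consecutive elements of $DP_{i-1}$ (itself a consequence of the induction hypothesis plus Lemma~\ref{lem:dph_}). Both routes use the same ingredients; yours yields a slightly stronger structural statement (the new column is exactly an integer interval with explicit bottom $[\pi_0>\pi_i]$ and top given by $dp(i,i)$) at the cost of two contiguity sub-arguments where the paper needs one pointwise construction. Two small points to make explicit in a write-up: in the contiguity step, when consecutive $a<a'$ in $DP_{i-1}$ have $a'_d=a_d+1$ with $a_d=0$ and $a'_p<\pi_0$, the induction hypothesis does not apply to $(a'_d-1,a'_p)$, but the needed inequality $a'_p\le a_p$ still follows from $a\ge(0,\pi_0)$; and in the inherited case the phrase ``weakly below $(d,p)$, hence weakly below its predecessor'' should be justified as an application of Lemma~\ref{lem:dph_} to a pair of $DP_{i-1}$ strictly below $(d,p)$, not as a general fact about pairs.
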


Now fix any $d$, where $1 \le d \le n$.
Then $(d, p) > (0, \pi_0)$ for all $p$.
By Lemma~\ref{lem:inclusion} with $i = n$,
any pair $(d + 1, p) \in DP_n$,
either $p < \pi_{n + 1}$ or $p > \pi_{n + 1}$,
has a unique corresponding pair $(d, p) \in DP_n$.
Thus $k_<(d + 1) \le k_<(d)$ and  $k_>(d + 1) \le k_>(d)$.

\bigskip
To complete the proof of Lemma~\ref{lem:concave},
it remains to prove Lemma~\ref{lem:inclusion}.
Our proof is by induction on $i$.
For the base case when $i = 0$,
$dp(h, i)$ is defined only for $h = 0$.
Since $dp(0, 0) = (0, \pi_0)$ is included in $DP_0$,
the lemma clearly holds.

We now proceed to the inductive step, and fix $i > 0$.
By Lemma~\ref{lem:dp0i},
$dp(0, i) = (0, \pi_0)$ is the only pair in $DP_i$ with $p$-part equal to $\pi_0$.
Thus the lemma holds for $p = \pi_0$.
There are two cases remaining:
either $p = \pi_i$,
or $p \neq \pi_i$ and $p \neq \pi_0$.

\paragraph{The first case}

We first consider the case that $p = \pi_i$.
By Lemma~\ref{lem:dpii},
the $p$-part of $dp(i, i)$ is $\pi_i$.

Write $dp(i, i)$ as $(d, \pi_i)$.
By Lemma~\ref{lem:dph},
$dp(i, i)$ is the largest pair in $DP_i$.
It suffices to prove that
$(c, \pi_i) \in DP_i$
for all $c < d$ such that $(c, \pi_i) \ge (0, \pi_0)$.

Fix any $c < d$ such that $(c, \pi_i) \ge (0, \pi_0)$.
Then
$(c, \pi_i) > (0, \pi_0)$
since $\pi_i \neq \pi_0$ for $i > 0$.
By Lemma~\ref{lem:dpii},
$dp(i, i) = (d, \pi_i)$ is the smallest pair
greater than $dp(i - 1, i - 1)$ with $p$-part equal to $\pi_i$.
Since $c < d$,
we must have $(c, \pi_i) \le dp(i - 1, i - 1)$.

Let $h \ge 0$ be the smallest integer such that
$(c, \pi_i) \le \hl{dp(h, i - 1)}$.
Then $h \le i - 1$.
Since $(c, \pi_i) > (0, \pi_0) = dp(0, i - 1)$,
we also have $h > 0$.

Write $dp(h, i - 1)$ as $(d', p')$. Then $(d', p') \in DP_{i - 1}$.
From
$(d', p') \ge (c, \pi_i)$,
it follows by Lemma~\ref{lem:pair} that
$(d' - 1, p') \ge (c - 1, \pi_i)$.

We have
$(d', p') \in DP_{i - 1}$,
$dp(h, i - 1) = (d', p') > (d' - 1, p')$,
and $h > 0$.

We next show that
$\hl{dp(h - 1, i - 1)} \ge (d' - 1, p')$.
Consider two cases:
\begin{itemize}\setlength\itemsep{0pt}

		\item $(d' - 1, p') < (0, \pi_0)$.
			Since $h > 0$, it follows by Lemma~\ref{lem:dph} that
			$dp(h - 1, i - 1) \ge dp(0, i - 1) = (0, \pi_0) > (d' - 1, p')$.

		\item $(d' - 1, p') \ge (0, \pi_0)$.
			Since $(d', p') \in DP_{i - 1}$ and $(d' - 1, p') \ge (0, \pi_0)$,
			it follows by induction that $(d' - 1, p') \in DP_{i - 1}$.
			Since $dp(h, i - 1) > (d' - 1, p')$,
			it follows by Lemma~\ref{lem:dph_} that $dp(h - 1, i - 1) \ge (d' - 1, p')$.

\end{itemize}

Thus
$dp(h - 1, i - 1) \ge (d' - 1, p') \ge (c - 1, \pi_i)$.
On the other hand, our choice of $h$ implies that
$dp(h - 1, i - 1) < (c, \pi_i) \le dp(h, i - 1)$.
Therefore, $dp(h - 1, i - 1)\to i = (c, \pi_i)$.

Recall the following recurrence for $0 < h < i \le n$:
\begin{equation}\label{eq:recurrence}
	dp(h,i) = \min\{\, dp(h, i-1),\, dp(h-1, i-1) \to i \,\}.
\end{equation}
Thus
$dp(h,i) = dp(h - 1, i - 1)\to i = (c, \pi_i)$.
Thus $(c, \pi_i) \in DP_i$.

\paragraph{The second case}

We next consider the case that $p \neq \pi_i$ and $p \neq \pi_0$.
Suppose that $(d, p) \in DP_i$ and $(d - 1, p) \ge (0, \pi_0)$.
We will show that $(d - 1, p) \in DP_i$.

Since $(d, p) \in DP_i$,
it follows that $(d, p) = dp(h', i)$ for some $h'$,
where $0 \le h' \le i$.
By Lemma~\ref{lem:dpii},
the $p$-part of $dp(i, i)$ is equal to $\pi_i$.
By Lemma~\ref{lem:dp0i},
the $p$-part of $dp(0, i)$ is equal to $\pi_0$.
Since $p \neq \pi_i$ and $p \neq \pi_0$,
we must have $0 < h' < i$.

By~\eqref{eq:recurrence},
we have $(d, p) = dp(h', i) = \min \{ dp(h', i - 1), dp(h' - 1, i - 1)\to i \}$.
By Lemma~\ref{lem:dptoi},
the $p$-part of $dp(h' - 1, i - 1)\to i$ is $\pi_i$.
Since $p \neq \pi_i$,
it follows that
$(d, p) \neq dp(h' - 1, i - 1)\to i$.
Thus
$(d, p) = dp(h', i - 1) \in DP_{i - 1}$,
and
$(d, p) < dp(h' - 1, i - 1)\to i$.

Write $dp(h' - 1, i - 1)$ as $(d', p')$. Then $(d', p') \in DP_{i - 1}$.
Also, $(d, p) < (d', p')\to i$.
By Lemma~\ref{lem:dptoi1}, it follows that
$(d - 1, p) < (d' - 1, p')\to i$.
By Lemma~\ref{lem:dph}, we have
$dp(h', i - 1) > dp(h' - 1, i - 1)$,
that is,
$(d, p) > (d', p')$.
By Lemma~\ref{lem:pair}, it follows that
$(d - 1, p) > (d' - 1, p')$.

Since $(d, p) \in DP_{i - 1}$ and $(d - 1, p) \ge (0, \pi_0)$,
it follows by induction that $(d - 1, p) \in DP_{i - 1}$.
Thus $(d - 1, p) = \hl{dp(h, i - 1)}$ for some $h$,
where $0 \le h \le i - 1$.
By Lemma~\ref{lem:dp0i},
the $p$-part of $dp(0, i - 1)$ is $\pi_0$.
Since $p \neq \pi_0$,
we must have $h > 0$.
Thus $0 < h < i$.

We have
$(d', p') \in DP_{i - 1}$,
$dp(h, i - 1) = (d - 1, p) > (d' - 1, p')$,
and $h > 0$.

We next show that
$\hl{dp(h - 1, i - 1)} \ge (d' - 1, p')$.
Consider two cases:
\begin{itemize}\setlength\itemsep{0pt}

		\item $(d' - 1, p') < (0, \pi_0)$.
			Since $h > 0$, it follows by Lemma~\ref{lem:dph} that
			$dp(h - 1, i - 1) \ge dp(0, i - 1) = (0, \pi_0) > (d' - 1, p')$.

		\item $(d' - 1, p') \ge (0, \pi_0)$.
			Since $(d', p') \in DP_{i - 1}$ and $(d' - 1, p') \ge (0, \pi_0)$,
			it follows by induction that $(d' - 1, p') \in DP_{i - 1}$.
			Since $dp(h, i - 1) > (d' - 1, p')$,
			it follows by Lemma~\ref{lem:dph_} that $dp(h - 1, i - 1) \ge (d' - 1, p')$.

\end{itemize}

Thus
$dp(h - 1, i - 1) \ge (d' - 1, p')$.
By Lemma~\ref{lem:dptoi}, it follows that
$dp(h - 1, i - 1)\to i \ge (d' - 1, p')\to i$.
Recall that
$dp(h, i - 1) = (d - 1, p) < (d' - 1, p')\to i$.

By~\eqref{eq:recurrence},
we have
$dp(h, i) = dp(h, i - 1) = (d - 1, p)$.
Thus $(d - 1, p) \in DP_i$, as desired.

\bigskip
This completes the proof of Lemma~\ref{lem:inclusion},
Lemma~\ref{lem:concave}, and Theorem~\ref{thm:pc1}.

\section{Algorithms on intervals of different lengths}

In this section we prove
Proposition~\ref{prp:fpt}.
Recall that \textsc{Join} is a special case of \textsc{J-Pack},
and \textsc{Tile} is a special case of the four problems
\textsc{Pack},
\textsc{Cover},
\textsc{J-Pack},
and
\textsc{J-Cover}.
It suffices to present algorithms for these four problems.

Suppose that $\kappa > 1$.
Let $\bs{\ell} = (\ell_1,\ldots,\ell_\kappa)$
be the $\kappa$ distinct lengths of the $n$ intervals in $\I$.
In the following, we use the notation $\bs{u}$ as a shorthand for
a $\kappa$-tuple $(u_1,\ldots,u_\kappa)$.
In particular,
$\bs{0} = (0,\ldots,0)$
and
$\bs{\tau} = (\tau,\ldots,\tau)$.
For $\bs{u} = (u_1,\ldots,u_\kappa)$,
denote by $-\bs{u}$
the $\kappa$-tuple $(-u_1,\ldots,-u_\kappa)$.
For
$\bs{u} = (u_1,\ldots,u_\kappa)$ and $\bs{v} = (v_1,\ldots,v_\kappa)$,
denote by $\bs{u}-\bs{v}$
the $\kappa$-tuple $(u_1 - v_1,\ldots,u_\kappa - v_\kappa)$,
denote by $\bs{u}\cdot\bs{v}$
the dot product $\sum_{h=1}^\kappa u_h v_h$,
and write $\bs{u} \le \bs{v}$ (respectively, $\bs{u} \ge \bs{v}$, $\bs{u} = \bs{v}$)
if $u_h \le v_h$ (respectively, $u_h \ge v_h$, $u_h = v_h$)
for all $1 \le h \le \kappa$.
Denote by $|\bs{u}|$ the sum $\sum_{h=1}^\kappa |u_h|$.
For $\bs{u}\ge\bs{0}$,
we use the phrase ``$\bs{u}$ intervals'' to refer to
$|\bs{u}|$ intervals, including $u_h$ intervals of each length $\ell_h$, $1 \le h \le \kappa$.

Let $B = [0,\ell_B)$.
Let $I_1,\ldots,I_n$,
where $I_i = [x_i, y_i)$,
be the $n$ intervals in $\I$ sorted in increasing $x_i + y_i$.
Add two dummy intervals
$I_0$ and $I_{n+1}$,
where $I_0 = [x_0, y_0)$,
with $y_0 = 0$ and $x_0 + y_0 < x_1 + y_1$,
and $I_{n+1} = [x_{n+1}, y_{n+1})$,
with $x_{n+1} = \ell_B$ and $x_n + y_n < x_{n+1} + y_{n+1}$.

For $0 \le i \le n+1$, denote by $\I_i$ the subfamily of intervals $I_0,I_1,\ldots,I_i$.
For $0 \le i \le n+1$,
denote by $\bs{m}_i$
the $\kappa$-tuple
$(m_{i,1},\ldots,m_{i,\kappa})$
where
$m_{i,h}$
for $1 \le h \le \kappa$
is the multiplicity of $\ell_h$
(that is, the number of intervals of length $\ell_h$)
in $\I_i \setminus\{I_0,I_{n+1}\}$.

\paragraph{Algorithm for \textsc{Pack}}

Let $A(i, \bs{u}, \bs{v})$,
where $0 \le i \le n+1$,
$\bs{0} \le \bs{u} \le \bs{\tau}$,
and
$\bs{0} \le \bs{v} \le \bs{\tau}$,
be the predicate whether
there exists a subfamily $\J \subseteq \I_i$ of
\emph{pairwise-disjoint}
intervals including $I_0$ and $I_i$,
such that
\begin{itemize}\setlength\itemsep{0pt}

		\item
			$\I_i \setminus \J$ includes \emph{exactly} $\bs{u}$ intervals,

		\item
			the positive gaps between consecutive intervals in $\J$ can accommodate
			$\bs{v}$ intervals (that is,
			the $\bs{v}$ intervals can be partitioned into subfamilies, one subfamily for each gap,
			such that each gap \emph{has space for} the corresponding intervals).

\end{itemize}
Then the $n$ intervals in $\I$ can be packed inside $B$ by moving $\tau$ intervals
if and only if
$A(n+1, \bs{v}, \bs{v})$ is true
for some $\bs{v} \ge \bs{0}$ with
$|\bs{v}| \le \tau$.

The table $A$ can be computed by dynamic programming.
For the base case when $i = 0$,
$A(0, \bs{u}, \bs{v})$ is true if and only if $\bs{u} = \bs{v} = \bs{0}$.
For $1 \le i \le n + 1$,
$A(i, \bs{u}, \bs{v})$ is true if and only if
$A(i', \bs{u'}, \bs{v'})$ is true
for some
$0 \le i' < i$,
$\bs{0} \le \bs{u'} \le \bs{u}$,
and
$\bs{0} \le \bs{v'} \le \bs{v}$
such that
\begin{gather*}
	i' \ge i - 1 - \tau,\quad
	\bs{u} - \bs{u'} = \bs{m}_{i-1} - \bs{m}_{i'},\\
	(\bs{v} - \bs{v'})\cdot\bs{\ell} \le x_i - y_{i'}.
\end{gather*}

\paragraph{Algorithm for \textsc{J-Pack}}

Let $A(i, \bs{u}, \bs{v})$,
where $0 \le i \le n+1$,
$\bs{0} \le \bs{u} \le \bs{\tau}$,
and
$\bs{0} \le \bs{v} \le \bs{\tau}$,
be the predicate whether
there exists a subfamily $\J \subseteq \I_i$ of
\emph{pairwise-disjoint}
intervals including $I_0$ and $I_i$,
such that
\begin{itemize}\setlength\itemsep{0pt}

		\item
			$\I_i \setminus \J$ includes \emph{exactly} $\bs{u}$ intervals,

		\item
			the positive gaps between consecutive intervals in $\J$ can accommodate
			$\bs{v}$ intervals (that is,
			the $\bs{v}$ intervals can be partitioned into subfamilies, one subfamily for each gap,
			such that the two boundary gaps,
			the one bounded by $I_0$ on the left,
			and the one bounded by $I_{n+1}$ on the right,
			if any,
			\emph{have space for},
			while the other gaps \emph{fit exactly}, the corresponding intervals).

\end{itemize}
Then the $n$ intervals in $\I$ can be joined into a contiguous interval
contained in $B$
by moving $\tau$ intervals
if and only if
$A(n+1, \bs{v}, \bs{v})$ is true
for some $\bs{v} \ge \bs{0}$ with
$|\bs{v}| \le \tau$.

The table $A$ can be computed by dynamic programming.
For the base case when $i = 0$,
$A(0, \bs{u}, \bs{v})$ is true if and only if $\bs{u} = \bs{v} = \bs{0}$.
For $1 \le i \le n + 1$,
$A(i, \bs{u}, \bs{v})$ is true if and only if
$A(i', \bs{u'}, \bs{v'})$ is true
for some
$0 \le i' < i$,
$\bs{0} \le \bs{u'} \le \bs{u}$,
and
$\bs{0} \le \bs{v'} \le \bs{v}$
such that
\begin{gather*}
	i' \ge i - 1 - \tau,\quad
	\bs{u} - \bs{u'} = \bs{m}_{i-1} - \bs{m}_{i'},\\
	\begin{cases}
		(\bs{v} - \bs{v'})\cdot\bs{\ell} = x_i - y_{i'}
		&\textrm{if } 0 < i < i' < n + 1\\
		(\bs{v} - \bs{v'})\cdot\bs{\ell} \le x_i - y_{i'}
		&\textrm{otherwise}.
	\end{cases}
\end{gather*}

\paragraph{Algorithm for \textsc{J-Cover}}

Let $A(i, \bs{u}, \bs{v})$,
where $0 \le i \le n+1$,
$\bs{0} \le \bs{u} \le \bs{\tau}$,
and
$\bs{0} \le \bs{v} \le \bs{\tau}$,
be the predicate whether
there exists a subfamily $\J \subseteq \I_i$ of
intervals including $I_0$ and $I_i$,
such that the intervals in $\J \setminus \{I_0,I_{n+1}\}$
are \emph{pairwise-disjoint}, and moreover,
\begin{itemize}\setlength\itemsep{0pt}

		\item
			$\I_i \setminus \J$ includes \emph{exactly} $\bs{u}$ intervals,

		\item
			the positive gaps between consecutive intervals in $\J$ can accommodate
			$\bs{v}$ intervals (that is,
			the $\bs{v}$ intervals can be partitioned into subfamilies, one subfamily for each gap,
			such that the two boundary gaps,
			the one bounded by $I_0$ on the left,
			and the one bounded by $I_{n+1}$ on the right,
			if any,
			\emph{can be covered by},
			while the other gaps \emph{fit exactly}, the corresponding intervals).

\end{itemize}
Then the $n$ intervals in $\I$ can be joined into a contiguous interval
containing $B$
by moving $\tau$ intervals
if and only if
$A(n+1, \bs{v}, \bs{v})$ is true
for some $\bs{v} \ge \bs{0}$ with
$|\bs{v}| \le \tau$.

The table $A$ can be computed by dynamic programming.
For the base case when $i = 0$,
$A(0, \bs{u}, \bs{v})$ is true if and only if $\bs{u} = \bs{v} = \bs{0}$.
For $1 \le i \le n + 1$,
$A(i, \bs{u}, \bs{v})$ is true if and only if
$A(i', \bs{u'}, \bs{v'})$ is true
for some
$0 \le i' < i$,
$\bs{0} \le \bs{u'} \le \bs{u}$,
and
$\bs{0} \le \bs{v'} \le \bs{v}$
such that
\begin{gather*}
	i' \ge i - 1 - \tau,\quad
	\bs{u} - \bs{u'} = \bs{m}_{i-1} - \bs{m}_{i'},\\
	\begin{cases}
		(\bs{v} - \bs{v'})\cdot\bs{\ell} = x_i - y_{i'}
		&\textrm{if } 0 < i < i' < n + 1\\
		(\bs{v} - \bs{v'})\cdot\bs{\ell} \ge x_i - y_{i'}
		&\textrm{otherwise}.
	\end{cases}
\end{gather*}

\paragraph{Algorithm for \textsc{Cover}}

Let $A(i, \bs{u}, \bs{v})$,
where $0 \le i \le n+1$,
$\bs{0} \le \bs{u} \le \bs{\tau}$,
and
$\bs{0} \le \bs{v} \le \bs{\tau}$,
be the predicate whether
there exists a subfamily $\J \subseteq \I_i$ of
intervals including $I_0$ and $I_i$,
\emph{with no interval properly contained in another},
such that
\begin{itemize}\setlength\itemsep{0pt}

		\item
			$\I_i \setminus \J$ includes \emph{at least} $\bs{u}$ intervals.

		\item
			the positive gaps between consecutive intervals in $\J$ can accommodate
			$\bs{v}$ intervals (that is,
			the $\bs{v}$ intervals can be partitioned into subfamilies, one subfamily for each gap,
			such that each gap \emph{can be covered by} the corresponding intervals).

\end{itemize}
Then $B$ can be covered by the $n$ intervals in $\I$ by moving $\tau$ intervals
if and only if
$A(n+1, \bs{v}, \bs{v})$ is true
for some $\bs{v} \ge \bs{0}$ with
$|\bs{v}| \le \tau$.

The table $A$ can be computed by dynamic programming.
For the base case when $i = 0$,
$A(0, \bs{u}, \bs{v})$ is true if and only if $\bs{u} = \bs{v} = \bs{0}$.
For $1 \le i \le n + 1$,
$A(i, \bs{u}, \bs{v})$ is true if and only if
$A(i', \bs{u'}, \bs{v'})$ is true
for some
$0 \le i' < i$,
$\bs{0} \le \bs{u'} \le \bs{u}$,
and
$\bs{0} \le \bs{v'} \le \bs{v}$
such that
\begin{gather*}
	\bs{u} - \bs{u'} \le \bs{m}_{i-1} - \bs{m}_{i'}\\
	x_{i'} \le x_i \textrm{ and } y_{i'} \le y_i,\quad
	(\bs{v} - \bs{v'})\cdot\bs{\ell} \ge x_i - y_{i'}.
\end{gather*}

\paragraph{Running time analysis}

The number of entries in the table $A$ is
$(n + 2)(\tau + 1)^{2\kappa}$.
But at the end of the algorithm,
we need to check only entries
$A(n+1, \bs{v}, \bs{v})$
for $\bs{v} \ge \bs{0}$ with $|\bs{v}| \le \tau$.
Thus we can restrict
the computation of $A(i, \bs{u}, \bs{v})$ 
to
$\bs{u} \ge \bs{0}$ with $|\bs{u}| \le \tau$
and
$\bs{v} \ge \bs{0}$ with $|\bs{v}| \le \tau$.

Consider the directed graph $G_{\kappa,\tau}$ with a vertex
for each $\kappa$-tuple $\bs{w} \ge \bs{0}$ with $|\bs{w}| \le \tau$,
and an edge from $\bs{w}$ to $\bs{w'}$
if and only if $\bs{w'} \le \bs{w}$.
Then $G_{\kappa,\tau}$ has ${\tau+\kappa\choose\kappa}$ vertices,
$O({\tau+\kappa\choose\kappa}^2)$ edges,
and can be built in
$O({\tau+\kappa\choose\kappa}^2\kappa)$ time.
Thus with some pre-processing, we can reduce the number of relevant table entries to
$(n + 2){\tau+\kappa\choose\kappa}^2$.
For the recurrence,
each entry is computed by looking up at most
$(\tau + 1){\tau+\kappa\choose\kappa}^2$
other entries for
\textsc{Pack} / \textsc{J-Pack} / \textsc{J-Cover},
and at most
$(n + 1){\tau+\kappa\choose\kappa}^2$
other entries for
\textsc{Cover},
with $O(\kappa)$ time on each look-up.
So the overall running time,
including the $O(n\log n)$ time on sorting,
is
$O(n\log n + n\tau\kappa{\tau+\kappa\choose\kappa}^4)$
for
\textsc{Pack} / \textsc{J-Pack} / \textsc{J-Cover},
and is
$O(n^2\kappa{\tau+\kappa\choose\kappa}^4)$
for
\textsc{Cover}.

Since $\tau \le n$ and $\kappa \le n$,
${\tau+\kappa\choose\kappa}$ is bounded by a polynomial in $n$
when either $\kappa$ or $\tau$ is constant.
Thus the overall running time is bounded by a polynomial in $n$
when either $\kappa$ or $\tau$ is constant.

\paragraph{A more careful implementation}

The running time of the algorithms for
\textsc{Pack},
\textsc{J-Pack},
and
\textsc{J-Cover}
can be improved by a more careful implementation.
Let $A'(i, t, \bs{d})$, where
$0 \le i \le n+1$,
$0 \le t \le \tau$,
and 
$-\bs{\tau} \le \bs{d} \le \bs{\tau}$,
be the predicate whether there exist
$\bs{u} \ge \bs{0}$ with $|\bs{u}| \le \tau$
and
$\bs{v} \ge \bs{0}$ with $|\bs{v}| \le \tau$,
such that $|\bs{u}| = t$, $\bs{u} - \bs{v} = \bs{d}$, and
$A(i, \bs{u}, \bs{v})$ is true.
Then
$A(n + 1, \bs{v}, \bs{v})$ is true for some
$\bs{v} \ge \bs{0}$ with $|\bs{v}| \le \tau$
if and only if
$A'(n + 1, t, \bs{0})$ is true for some $t \le \tau$.
We can compute $A'$ by dynamic programming in a similar way as $A$.

Since $\bs{d} = \bs{u} - \bs{v}$, we have $d_h = u_h - v_h$ for each $h$, $1 \le h \le \kappa$.
If $d_h$ is negative, then $|d_h|$ is the number of intervals of length $\ell_h$
in $\I \setminus \I_i$ that need to be moved to the gaps between consecutive intervals in $\J$.
If $d_h$ is positive, then $|d_h|$ is the number of intervals of length $\ell_h$
in $\I_i \setminus \J$ that remain to be moved to later gaps.
In both cases, a nonzero component $d_h$ of $\bs{d}$
signifies a commitment to move $|d_h|$ intervals in $\I \setminus \J$.
The $|d_h|$ moves for different values of $h$ are independent
because they correspond to intervals of different lengths.
Thus we only need to consider $\bs{d}$
with $|\bs{d}| \le \tau$.

There are exactly ${\tau+\kappa\choose\kappa}$ nonnegative tuples $\bs{d} \ge \bs{0}$.
For each such tuple,
the number of nonzero components is at most $\min\{\kappa,\tau\}$,
and there are at most $2^{\min\{\kappa,\tau\}}$ ways to add positive or negative signs
to them.
Thus the number of $\kappa$-tuples $\bs{d}$
with $|\bs{d}| \le \tau$
is at most
${\tau+\kappa\choose\kappa}2^{\min\{\kappa,\tau\}}$.
With some pre-processing,
we can reduce the number of entries of $A'$ to
$O(n\tau){\tau+\kappa\choose\kappa}2^{\min\{\kappa,\tau\}}$,
and correspondingly reduce the number of table look-ups for each entry to
$O(\tau){\tau+\kappa\choose\kappa}2^{\min\{\kappa,\tau\}}$.
Then the overall running time becomes
$O(n\log n + n\cdot\kappa\tau^2{\tau+\kappa\choose\kappa}^2 4^{\min\{\kappa,\tau\}})$
for
\textsc{Pack},
\textsc{J-Pack},
and
\textsc{J-Cover}.

\bigskip
This completes the proof of Proposition~\ref{prp:fpt}.

\section{Intractability}

In this section we prove
Theorem~\ref{thm:w1hard}.
Recall that
\textsc{Tile} is a special case of the four problems
\textsc{Pack}, \textsc{Cover},
\textsc{J-Pack}, and \textsc{J-Cover},
which are equivalent when $\ell_\I = \ell_B$.
Thus it suffices to prove the hardness of the two problems \textsc{Tile} and \textsc{Join}.

\subsection{Strong-NP-hardness and W[1]-hardness with parameter $\sigma$}

As a warm-up exercise,
we first present a simple proof of the strong-NP-hardness,
and W[1]-hardness with parameter $\sigma$,
of the two problems \textsc{Tile} and \textsc{Join}.

Our proof is by a reduction from the strongly NP-hard problem
\textsc{Bin Packing}~\cite[Problem~SR1]{GJ79}.
Given $\hat n$ items of integer lengths $a_i$, $1 \le i \le \hat n$,
and $\hat\kappa$ bins each of integer length $b$,
the problem \textsc{Bin Packing} asks whether
the $\hat n$ items can be packed inside the $\hat\kappa$ bins,
that is, whether
the $\hat n$ items can be partitioned into $\hat\kappa$ subsets,
such that the total length of items in each subset is at most $b$.
Our reduction is from a restricted version of \textsc{Bin Packing}
where all integers $a_i$ and $b$ are encoded in unary and moreover
$\sum_{i=1}^{\hat n} a_i = \hat\kappa\,b$.
\textsc{Bin Packing} is W[1]-hard
with parameter $\hat\kappa$ even for this restricted version~\cite{JKMS13}.
Without loss of generality, we assume that $a_i \le b$ for $1 \le i \le \hat n$.

Our reduction works as follows.
Let $\ell_B = \hat\kappa(b + 1)$.
Put the interval $B$ at $[0,\ell_B)$,
then partition it into $2\hat\kappa$ intervals of alternating lengths $b$ and $1$,
where the $\hat\kappa$ intervals of length $b$ are called \emph{bin intervals},
and the $\hat\kappa$ intervals of length $1$ are called \emph{separator intervals}.
Let $\I$ be a family of $n = \hat n + \hat\kappa$ intervals of total length
$\ell_\I = \sum_{i=1}^{\hat n} a_i + \hat\kappa = \ell_B$,
including $\hat n$ \emph{item intervals} of lengths $a_i$,
all sharing the same left endpoint as $B$,
and the $\hat\kappa$ separator intervals from $B$.
Let $\tau = \hat n - 1$ and $\sigma = \hat\kappa + 1$.

\begin{figure}[htb]
	\centering\includegraphics{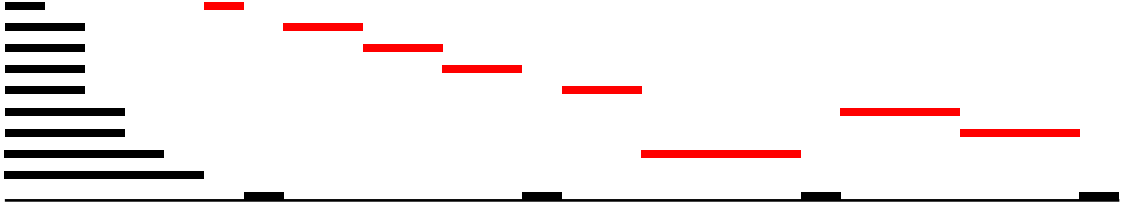}
	\caption{An interval $B$ and a family $\I$ of $n = 13$ intervals
	constructed from a \textsc{Bin Packing} instance consisting of
	$\hat n = 9$ items of lengths $1, 2, 2, 2, 2, 3, 3, 4, 5$
	and $\hat\kappa = 4$ bins of length $6$.
	$B$ can be tiled by moving $\tau = 8$ intervals and keeping $\sigma = 5$ intervals
	unmoved.
	Initial positions of intervals in $\I$ are shown in black;
	final positions of moved intervals are shown in red.
	}\label{fig:sigma}
\end{figure}

This completes the construction.
Refer to Figure~\ref{fig:sigma} for an example.
We claim that
the $\hat n$ items can be partitioned into $\hat\kappa$ subsets
each of total length $b$
if and only if
$B$ can be tiled with $\I$
by moving $\tau$ intervals and keeping $\sigma$ intervals unmoved,
if and only if
$\I$ can be joined into a contiguous interval
by moving $\tau$ intervals and keeping $\sigma$ intervals unmoved.

It is easy to see that
if the $\hat n$ items can be partitioned into $\hat\kappa$ subsets
each of total length $b$,
then $B$ can be tiled with $\I$
by moving $\tau$ intervals and keeping $\sigma$ intervals unmoved.
Also,
any tiling of $B$ with $\I$ necessarily joins $\I$ into a contiguous interval.
Thus we have the two direct, \emph{only if}, implications of the claim.
To complete the proof, it suffices to show that
if $\I$ can be joined into a contiguous interval
by moving $\tau$ intervals and keeping $\sigma$ intervals unmoved,
then the $\hat n$ items can be partitioned into $\hat\kappa$ subsets
each of total length $b$.

Note that the $\hat n$ item intervals pairwise intersect,
but have to become pairwise disjoint when joined into a contiguous interval.
Thus by our choice of $\tau$ and $\sigma$,
all but one of the item intervals must move,
and all separator intervals must not move.
Since there is an unmoved interval in $\I$ at either end of the interval $B$,
and since $\ell_\I = \ell_B$,
the intervals in $\I$ must be joined into
a contiguous interval that coincides with $B$.
It follows that each bin interval of $B$
must be covered by a subset of item intervals of total length $b$.
Correspondingly, we get a partition of the $\hat n$ items in $\hat\kappa$ bins.

The reduction is clearly polynomial,
and is FPT with parameter $\sigma = \hat\kappa + 1$.
Thus \textsc{Tile} and \textsc{Join}
are NP-hard, and W[1]-hard with parameter $\sigma$,
even when the input is encoded in unary.

\subsection{W[1]-hardness with parameter $\kappa$ and with parameter $\tau$}

To prove the W[1]-hardness, with parameter $\kappa$ and with parameter $\tau$,
of the two problems \textsc{Tile} and \textsc{Join},
we use two reductions from the same W[1]-hard problem \textsc{Colored Clique}~\cite{FHRV09}.

Let $G$ be a graph with $\hat n$ vertices and $\hat m$ edges,
where each vertex has one of $\hat\kappa$ colors.
The problem \textsc{Colored Clique}
asks whether there exists in $G$ a \emph{colored clique}
of $\hat\kappa$ pairwise-adjacent vertices
including exactly one vertex of each color.
Denote the $\hat n$ vertices by $0,\dots,\hat n - 1$,
and denote the $\hat\kappa$ colors by $0,\dots,\hat\kappa - 1$.
Without loss of generality, we assume that
every edge in $G$ is incident to two vertices of different colors,
every vertex in $G$ is adjacent to at least one vertex of each of
the other $\hat\kappa - 1$ colors,
and $\hat\kappa \ge 3$.

\subsubsection{W[1]-hardness with parameter $\kappa$}

We first prove the W[1]-hardness with parameter $\kappa$ of \textsc{Tile} and \textsc{Join}.
Let
$\tau = 3\hat\kappa + 3{\hat\kappa\choose 2} + 8(\hat n - 1){\hat\kappa\choose 2}$,
and $\ell = 6\hat\kappa^2$.
We will construct an interval $B$ and a family $\I$ of intervals.
The lengths of the intervals in $\I$ include:
\begin{itemize}\setlength\itemsep{0pt}

		\item
			a \emph{fractional length} $\ell_f = \frac1{\tau + 1}$,

		\item
			a \emph{vertex length} $\ell_v = (2 + 2(\hat n - 1)(\hat\kappa - 1))\cdot\ell$,

		\item
			an \emph{edge length} $\ell_e = (2 + 4(\hat n - 1))\cdot\ell$,

		\item
			a \emph{padding length} $\ell_p = \ell$,

		\item
			two \emph{incidence lengths}
			$\ell_{\imath\jmath}^+ = \ell + \ell_{\imath\jmath}$
			and
			$\ell_{\imath\jmath}^- = \ell - \ell_{\imath\jmath}$,
			where
			$\ell_{\imath\jmath} = \hat\kappa\cdot\imath + \jmath$,
			for each ordered pair of colors $\imath\jmath$ with $\imath \neq \jmath$,

		\item
			two \emph{color lengths}
			$\ell_{\{\imath\}}^+ = \ell + \ell_{\{\imath\}}$
			and
			$\ell_{\{\imath\}}^- = \ell - \ell_{\{\imath\}}$,
			where
			$\ell_{\{\imath\}} = \hat\kappa^2 + \hat\kappa\cdot\imath + \imath$,
			for each color $\imath$,

		\item
			two \emph{color-pair lengths}
			$\ell_{\{\imath\jmath\}}^+ = \ell + \ell_{\{\imath\jmath\}}$
			and
			$\ell_{\{\imath\jmath\}}^- = \ell - \ell_{\{\imath\jmath\}}$,
			where
			$\ell_{\{\imath\jmath\}} = \hat\kappa^2 + \hat\kappa\cdot\imath + \jmath$,
			for each unordered pair of colors $\{\imath,\jmath\}$ with $\imath < \jmath$.

\end{itemize}

In total,
there are $k = 4 + 4{\hat\kappa\choose 2} + 2\hat\kappa + 2{\hat\kappa\choose 2}$ lengths
listed above.
These $k$ lengths are all distinct except that
$\ell_v = \ell_e$ when $\hat\kappa = 3$.
Thus the number $\kappa$ of distinct lengths is either $k$ or $k - 1$,
which is $O(\hat\kappa^2)$.
Since
$\ell_p = \ell = 6\hat\kappa^2$,
$0 < \ell_{\imath\jmath} < \hat\kappa^2$,
$\hat\kappa^2 \le \ell_{\{\imath\}} < 2\hat\kappa^2$,
and
$\hat\kappa^2 < \ell_{\{\imath\jmath\}} < 2\hat\kappa^2$,
the lengths
$\ell_p$,
$\ell_{\imath\jmath}^\pm$,
$\ell_{\{\imath\}}^\pm$,
and
$\ell_{\{\imath\jmath\}}^\pm$
are all greater than $4\hat\kappa^2$ and less than $8\hat\kappa^2$,
and hence differ from $\ell$ and from each other by factors less than two.

\paragraph{The interval $B$ and its partition}

Let
$g_1 = 2\hat\kappa + 2{\hat\kappa\choose 2}$,
$g_2 = 4(\hat n - 1){\hat\kappa\choose 2} + 4(\hat n - 1){\hat\kappa\choose 2}$,
$n_B = (\hat n + \hat m + g_1 + g_2 + 1) + \hat n + \hat m + g_1 + g_2$,
and
$\ell_B = (\hat n + \hat m + g_1 + g_2 + 1)\cdot 1
+ \hat n\cdot\ell_v + \hat m\cdot\ell_e + (g_1 + g_2)\cdot\ell$.
Put the interval $B$ at $[0,\ell_B)$,
then partition it into $n_B$ subintervals,
including
\begin{itemize}\setlength\itemsep{0pt}

		\item
			$\hat n + \hat m + g_1 + g_2 + 1$ \emph{separator intervals} of length $1$,

		\item
			$\hat n$ \emph{vertex intervals} of length $\ell_v$,

		\item
			$\hat m$ \emph{edge intervals} of length $\ell_e$,

		\item
			$g_1$ \emph{type-1 gap intervals} of various lengths with average $\ell$, including
		\begin{itemize}\setlength\itemsep{0pt}

				\item
					one interval of each length
					$\ell_{\{\imath\}}^+$ and $\ell_{\{\imath\}}^-$,
					for each color $\imath$,

				\item
					one interval of each length
					$\ell_{\{\imath\jmath\}}^+$ and $\ell_{\{\imath\jmath\}}^-$,
					for each unordered pair of colors $\{\imath,\jmath\}$ with $\imath < \jmath$,

		\end{itemize}

	\item
		$g_2$ \emph{type-2 gap intervals} of various lengths with average $\ell$, including
	\begin{itemize}\setlength\itemsep{0pt}

			\item
				$\hat n - 1$ intervals of each length
				$\ell_{\imath\jmath}^+$ and $\ell_{\imath\jmath}^-$,
				for each ordered pair of colors $\imath\jmath$ with $\imath \neq \jmath$,
			\item
				$4(\hat n - 1){\hat\kappa\choose 2}$ intervals of length $\ell_p = \ell$,

	\end{itemize}

\end{itemize}
where
the $\hat n + \hat m + g_1 + g_2$
vertex/edge/gap intervals are interspersed between
the $\hat n + \hat m + g_1 + g_2 + 1$
separator intervals.

\paragraph{The family $\I$ of $n$ intervals}

Let $n
= (\hat n + \hat m + g_1 + g_2 + 1)\cdot (\tau + 1)
+ \hat n\cdot(2 + 2(\hat n - 1)(\hat\kappa - 1))
+ \hat m\cdot(2 + 4(\hat n - 1))
+ \hat\kappa + {\hat\kappa\choose 2}$.
Construct the family $\I = \I_0 \cup \I_1 \cup \I_2 \cup \I_3$ of $n$ intervals
in four parts as follows.

First,
for each of the
$\hat n + \hat m + g_1 + g_2 + 1$ separator intervals of length $1$,
partition it further into $\tau + 1$ \emph{fractional intervals}
of length $\ell_f$,
and put them in $\I_0$.

Next, for each vertex $i$ of color $\imath$,
take a distinct vertex interval of length $\ell_v$ in the partition of $B$,
partition it further into $2 + 2(\hat n - 1)(\hat\kappa - 1)$ intervals,
which encode the color $\imath$ and the vertex $i$,
and put them in $\I_1$:
\begin{itemize}\setlength\itemsep{0pt}

		\item
			one interval of each length $\ell_{\{\imath\}}^+$ and $\ell_{\{\imath\}}^-$,

		\item
			$2(\hat n - 1)$ intervals for each color $\jmath \neq \imath$, including
		\begin{itemize}\setlength\itemsep{0pt}

				\item
					$i$ intervals of each length $\ell_{\imath\jmath}^+$ and $\ell_{\imath\jmath}^-$,
				\item
					$\hat n - 1 - i$ pairs of intervals of length $\ell_p = \ell$.

		\end{itemize}

\end{itemize}

Next,
for each edge $\{i,j\}$ of color pair $\{\imath,\jmath\}$ with $\imath < \jmath$,
take a distinct edge interval of length $\ell_e$ in the partition of $B$,
partition it further into $2 + 4(\hat n - 1)$ intervals,
which encode the color pair $\{\imath,\jmath\}$ and the vertices $\{i,j\}$,
and put them in $\I_2$:
\begin{itemize}\setlength\itemsep{0pt}

		\item
			one interval of each length $\ell_{\{\imath\jmath\}}^+$ and $\ell_{\{\imath\jmath\}}^-$,

		\item
			$4(\hat n - 1)$ intervals including
		\begin{itemize}\setlength\itemsep{0pt}

				\item
					$\hat n - 1 - i$ intervals of each length
					$\ell_{\imath\jmath}^+$ and $\ell_{\imath\jmath}^-$,
				\item
					$i$ pairs of intervals of length $\ell_p = \ell$,

				\item
					$\hat n - 1 - j$ intervals of each length
					$\ell_{\jmath\imath}^+$ and $\ell_{\jmath\imath}^-$,
				\item
					$j$ pairs of intervals of length $\ell_p = \ell$.

		\end{itemize}

\end{itemize}

Finally,
construct
$\hat\kappa$ intervals of length $\ell_v$,
and ${\hat\kappa\choose 2}$ intervals of length $\ell_e$,
all sharing the same left endpoint as $B$,
and put them in $\I_3$.

\begin{figure}[htb]
	\centering\includegraphics{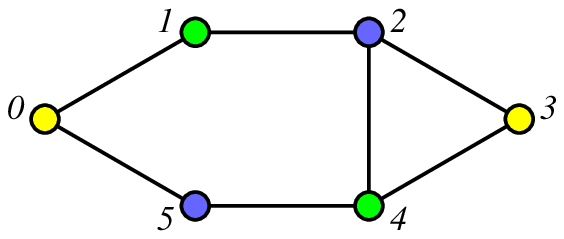}\input{kappa.tex}%
	\begin{align*}
		&	(60,48)^2 \quad (61,47)^2 \quad (55,53)^3 \quad (56,52)^3 \quad (57,51)^4 \quad (59,49)^4\\
		& (60,48)^3 \quad (61,47)^3 \quad (55,53)^2 \quad (56,52)^2 \quad (57,51)^1 \quad (59,49)^1
	\end{align*}
	\caption{Top: A graph of $\hat n = 6$ vertices and $\hat m = 7$ edges with $\hat\kappa = 3$ colors,
	where vertex $i$ has color $i \bmod 3$.
	The three vertices $2,3,4$ form a colored clique.
	Center:
	The $\hat n = 6$ vertex intervals and $\hat m = 7$ edge intervals
	are partitioned into the listed lengths
	(in particular, $\ell_p = 54$ is the padding length),
	where superscripts denote multiplicities,
	and the incidence lengths for the colored clique $2,3,4$ are highlighted in red.
	Bottom:
	The multiplicities of the highlighted lengths from the vertex intervals
	are complementary with those from the edge intervals,
	and add up to $\hat n - 1 = 5$.}\label{fig:kappa}
\end{figure}

\bigskip
This completes the construction of $B$ and $\I$.
Refer to Figure~\ref{fig:kappa} for an example.
Among the intervals in the partition of $B$,
the separator, vertex, and edge intervals
are further partitioned to construct intervals in
$\I_0$, $\I_1$, and $\I_2$, respectively.
Only the gap intervals are not used in the construction of $\I$;
their total length is $(g_1 + g_2)\cdot\ell$.
On the other hand, the total length of the intervals in $\I_3$ is
\begin{align*}
	\hat\kappa\cdot\ell_v + {\hat\kappa\choose 2}\cdot\ell_e
	&= \hat\kappa\cdot(2 + 2(\hat n - 1)(\hat\kappa - 1))\cdot\ell
	+ {\hat\kappa\choose 2}\cdot(2 + 4(\hat n - 1))\cdot\ell
	\\
	&= \left(2\hat\kappa + 2{\hat\kappa\choose 2}
	+ 4(\hat n - 1){\hat\kappa\choose 2}
	+ 4(\hat n - 1){\hat\kappa\choose 2}
	\right)\cdot\ell
	= (g_1 + g_2)\cdot\ell.
\end{align*}
Thus $\ell_\I = \ell_B$, as expected.

For convenience we have used fractional intervals of length $\ell_f = \frac1{\tau + 1}$.
By scaling with factor $\tau + 1$,
the coordinates of all intervals can be converted to integers polynomial in
$\hat\kappa$ and $\hat n$.
Thus the reduction is strongly polynomial.
Since $\kappa = O(\hat\kappa^2)$, the reduction is also FPT\@.
We claim that
$G$ has a colored clique of $\kappa$ vertices
if and only if
$B$ can be tiled with $\I$ by moving $\tau$ intervals,
if and only if
$\I$ can be joined into a contiguous interval in $\tau$ moves.

\paragraph{\textsc{Colored Clique} $\implies$ \textsc{Tile}}

Suppose that $G$ has a colored clique $K$ of $\hat\kappa$ vertices.
We will tile $B$ with $\I$ by moving $\tau$ intervals as follows.
First move the $\hat\kappa + {\hat\kappa\choose 2}$ intervals in $\I_3$
to cover the $\hat\kappa$ vertex intervals 
and the ${\hat\kappa\choose 2}$ edge intervals inside $B$
corresponding to the vertices and edges in $K$.
Next move the corresponding intervals in $\I_1$ and $\I_2$,
including the $2 + 2(\hat n - 1)(\hat\kappa - 1)$ intervals
composing each of these vertex intervals,
and the $2 + 4(\hat n - 1)$ intervals
composing each of these edge intervals,
to cover the $g_1 + g_2$ gap intervals.
The number of these intervals in $\I_1$ and $\I_2$ is
$$
\hat\kappa \cdot (2 + 2(\hat n - 1)(\hat\kappa - 1))
+ {\hat\kappa\choose 2} \cdot (2 + 4(\hat n - 1))
= 2\hat\kappa + 2{\hat\kappa\choose 2}
+ 8(\hat n - 1){\hat\kappa\choose 2}
= g_1 + g_2.
$$
The total number of moves is
$$
\hat\kappa + {\hat\kappa\choose 2}
+ g_1 + g_2
= 3\hat\kappa + 3{\hat\kappa\choose 2} + 8(\hat n - 1){\hat\kappa\choose 2}
= \tau.
$$

Since $K$ is a colored clique,
the intervals in $\I_1$
composing the $\hat\kappa$ vertex intervals
include exactly one pair of intervals of lengths
$\ell_{\{\imath\}}^\pm$
for each color $\imath$,
and the intervals in $\I_2$
composing the ${\hat\kappa\choose 2}$ edge intervals
include exactly one pair of intervals of lengths
$\ell_{\{\imath\jmath\}}^\pm$
for each unordered pair of colors $\{\imath,\jmath\}$ with $\imath < \jmath$,
to cover the
$g_1 = 2\hat\kappa + 2{\hat\kappa\choose 2}$ type-1 gap intervals of these lengths.

Moreover,
by design of the complementary multiplicities of incidence and standard lengths
in partitioning vertex and edge intervals inside $B$ into intervals in $\I_1$ and $\I_2$,
respectively,
the intervals
composing the $\hat\kappa$ vertex intervals
and the ${\hat\kappa\choose 2}$ edge intervals also include,
for each ordered pair of colors $\imath\jmath$ with $\imath \neq \jmath$,
exactly
$\hat n - 1$ pairs of intervals of lengths
$\ell_{\imath\jmath}^\pm$,
and exactly
$\hat n - 1$ pairs of intervals of length $\ell$,
to cover the
$g_2 = 8(\hat n - 1){\hat\kappa\choose 2}$
type-2 gap intervals of these lengths.
Thus $B$ is tiled with $\I$ by moving $\tau$ intervals.

\paragraph{\textsc{Tile} $\implies$ \textsc{Join}}

Any tiling of $B$ with $\I$ necessarily joins $\I$ into a contiguous interval.
Thus if $B$ can be tiled with $\I$ by moving $\tau$ intervals,
then $\I$ can be joined into a contiguous interval in $\tau$ moves.

\paragraph{\textsc{Join} $\implies$ \textsc{Colored Clique}}

Suppose that $\I$ can be joined into a contiguous interval in $\tau$ moves.
We will find a colored clique of $\hat\kappa$ vertices in $G$.

First note that each separator interval inside $B$
is composed of $\tau + 1$ fractional intervals in $\I_0$,
which are so numerous and so short that moving $\tau$ of them is not enough to change
their neighboring spaces by~$1$.
Since all other intervals in $\I$ and all vertex/edge/gap intervals inside $B$
have integer coordinates,
we can assume without loss of generality that no fractional interval in $\I_0$ is moved.
Then, to join the intervals in $\I$ into a contiguous interval,
we must cover all gap intervals inside $B$.

All
$\hat\kappa + {\hat\kappa\choose 2}$
intervals in $\I_3$ contain the $\tau + 1$ fractional intervals in $\I_0$
composing the leftmost separator interval inside $B$,
so they must be moved.
Since their lengths, $\ell_v$ and $\ell_e$,
are greater than the various lengths of the gap intervals inside $B$,
they cannot be moved to fill the gaps directly.
To fill the gaps, we have to use intervals in $\I_1$ and $\I_2$.

Recall that the lengths of all intervals in $\I_1$ and $\I_2$,
and the lengths of all gap intervals inside $B$,
are around $\ell$, and differ from each other by factors less than two.
Thus each gap interval must be covered by one interval of the same length in $\I_1$ or $\I_2$.
In total, we need $g_1 + g_2$ moves to fill the gaps.
Since $\tau = \hat\kappa + {\hat\kappa\choose 2} + g_1 + g_2$,
there is not a single move to waste.
In summary,
we must first move the
$\hat\kappa + {\hat\kappa\choose 2}$
intervals in $\I_3$
to cover some portions of vertex and edge intervals inside $B$,
and then move exactly $g_1 + g_2$ intervals in $\I_1$ and $\I_2$
composing these covered portions
to cover the $g_1 + g_2$ gap intervals inside $B$.

First consider the
$g_1 = 2\hat\kappa + 2{\hat\kappa\choose 2}$
type-1 gap intervals.
Each type-1 gap interval of a color (respectively, color-pair) length
must be covered by one interval of the same length in $\I_1$ (respectively, $\I_2$),
which was constructed from some vertex (respectively, edge) interval inside $B$.
Since there is a type-1 gap interval of every color and color-pair length,
the
$\hat\kappa + {\hat\kappa\choose 2}$
intervals in $\I_3$
must cover exactly $\hat\kappa$ vertex intervals
and ${\hat\kappa\choose 2}$ edge intervals inside $B$,
and the corresponding $\hat\kappa$ vertices and
${\hat\kappa\choose 2}$ edges in $G$ must span all $\hat\kappa$ colors
and all ${\hat\kappa\choose 2}$ color pairs.

Next consider the $g_2 = 8(\hat n - 1){\hat\kappa\choose 2}$ type-2 gap intervals.
Among the intervals in $\I_1$ and $\I_2$ composing
the $\hat\kappa$ vertex intervals
and ${\hat\kappa\choose 2}$ edge intervals inside $B$,
the intervals of color and color-pair lengths are
used to cover the type-1 gap intervals,
so the intervals of incidence and padding lengths are left to cover the type-2 gap intervals.
The $g_2$ type-2 gap intervals include
exactly $\hat n - 1$ type-2 gap intervals of each incidence length,
and $4(\hat n - 1){\hat\kappa\choose 2}$ intervals of the padding length.
In particular,
for each unordered pair of colors $\{\imath,\jmath\}$ with $\imath < \jmath$,
the multiplicities of the four incidence lengths
$\ell_{\imath\jmath}^\pm$ and $\ell_{\jmath\imath}^\pm$
are all $\hat n - 1$.
By design of the complementary multiplicities of incidence lengths in $\I_1$ and $\I_2$,
the target of $\hat n - 1$
for
$\ell_{\imath\jmath}^\pm$ and $\ell_{\jmath\imath}^\pm$
can be reached only if
the multiplicities of $\hat n - 1 - i$ for
$\ell_{\imath\jmath}^\pm$
and
$\hat n - 1 - j$ for
$\ell_{\jmath\imath}^\pm$
in $\I_2$
are paired with
the multiplicities of $i$ for
$\ell_{\imath\jmath}^\pm$
and
$j$ for
$\ell_{\jmath\imath}^\pm$
in $\I_1$.
Such pairings require
the edge $\{i,j\}$ of color pair $\{\imath,\jmath\}$
be consistent with the vertices $i'$ and $j'$
of the corresponding colors $\imath$ and $\jmath$,
that is, $i = i'$ and $j = j'$.
Thus
the $\hat\kappa$ vertices and the ${\hat\kappa\choose 2}$ edges
form a colored clique.

\subsubsection{W[1]-hardness with parameter $\tau$}

We next prove the W[1]-hardness with parameter $\tau$ of \textsc{Tile} and \textsc{Join}.
Let
$\tau = 3\hat\kappa + 3{\hat\kappa\choose 2} + 8{\hat\kappa\choose 2}$,
$\kappa = 4 + 4\hat n(\hat\kappa - 1) + 2\hat\kappa + 2{\hat\kappa\choose 2}$,
and $\ell = \hat n\hat\kappa + \hat\kappa^2$.
We will construct an interval $B$,
and a family $\I$ of intervals of $\kappa$ distinct lengths.
The $\kappa$ lengths include
\begin{itemize}\setlength\itemsep{0pt}

		\item
			a \emph{fractional length} $\ell_f = \frac1{\tau + 1}$,

		\item
			a \emph{vertex length} $\ell_v = 2\cdot 3\ell + 2(\hat\kappa - 1)\cdot 7\ell$,

		\item
			an \emph{edge length} $\ell_e = 2\cdot 3\ell + 4\cdot 5\ell$,

		\item
			a \emph{pairing length} $\ell_p = 12\ell$,

		\item
			four \emph{incidence lengths}
			$\dot\ell_{i\jmath}^+ = 7\ell + \ell_{i\jmath}$,
			$\dot\ell_{i\jmath}^- = 7\ell - \ell_{i\jmath}$,
			$\ddot\ell_{i\jmath}^+ = 5\ell + \ell_{i\jmath}$,
			$\ddot\ell_{i\jmath}^- = 5\ell - \ell_{i\jmath}$,
			where
			$\ell_{i\jmath} = \hat\kappa\cdot i + \jmath + 1$,
			for each vertex $i$ of color $\imath$
			and for each color $\jmath \neq \imath$,

		\item
			two \emph{color lengths}
			$\ell_{\{\imath\}}^+ = 3\ell + \ell_{\{\imath\}}$
			and
			$\ell_{\{\imath\}}^- = 3\ell - \ell_{\{\imath\}}$,
			where
			$\ell_{\{\imath\}} = \hat\kappa\cdot\imath + \imath + 1$,
			for each color $\imath$,

		\item
			two \emph{color-pair lengths}
			$\ell_{\{\imath\jmath\}}^+ = 3\ell + \ell_{\{\imath\jmath\}}$
			and
			$\ell_{\{\imath\jmath\}}^- = 3\ell - \ell_{\{\imath\jmath\}}$,
			where
			$\ell_{\{\imath\jmath\}} = \hat\kappa\cdot\imath + \jmath + 1$,
			for each unordered pair of colors $\{\imath,\jmath\}$ with $\imath < \jmath$.

\end{itemize}

These $\kappa$ lengths are all distinct.
From
$\ell = \hat n\hat\kappa + \hat\kappa^2$,
it follows that
$\hat n\hat\kappa < \ell$
and
$\hat\kappa^2 < \ell$.
Since
$0 < \ell_{i\jmath} \le \hat n\hat\kappa$,
we have
$4\ell <
\ddot\ell_{i\jmath}^\pm
< 6\ell <
\dot\ell_{i\jmath}^\pm
< 8\ell$.
Since
$0 < \ell_{\{\imath\}} \le \hat\kappa^2$
and
$0 < \ell_{\{\imath\jmath\}} < \hat\kappa^2$,
the lengths
$\ell_{\{\imath\}}^\pm$
and
$\ell_{\{\imath\jmath\}}^\pm$
are all greater than $2\ell$ and less than $4\ell$,
and hence differ from each other by factors less than two.

\paragraph{The interval $B$ and its partition}

Let
$g_1 = 2\hat\kappa + 2{\hat\kappa\choose 2}$,
$g_2 = 4{\hat\kappa\choose 2}$,
$n_B = (\hat n + \hat m + g_1 + g_2 + 1) + \hat n + \hat m + g_1 + g_2$,
and
$\ell_B = (\hat n + \hat m + g_1 + g_2 + 1)\cdot 1
+ \hat n\cdot\ell_v + \hat m\cdot\ell_e
+ g_1\cdot 3\ell + g_2\cdot 12\ell$.
Put the interval $B$ at $[0,\ell_B)$,
then partition it into $n_B$ subintervals,
including
\begin{itemize}\setlength\itemsep{0pt}

		\item
			$\hat n + \hat m + g_1 + g_2 + 1$ \emph{separator intervals} of length $1$,

		\item
			$\hat n$ \emph{vertex intervals} of length $\ell_v$,

		\item
			$\hat m$ \emph{edge intervals} of length $\ell_e$,

		\item
			$g_1$ \emph{type-1 gap intervals} of various lengths with average $3\ell$, including
		\begin{itemize}\setlength\itemsep{0pt}

				\item
					one interval of each length
					$\ell_{\{\imath\}}^+$ and $\ell_{\{\imath\}}^-$,
					for each color $\imath$,

				\item
					one interval of each length
					$\ell_{\{\imath\jmath\}}^+$ and $\ell_{\{\imath\jmath\}}^-$,
					for each unordered pair of colors $\{\imath,\jmath\}$ with $\imath < \jmath$,

		\end{itemize}

	\item
		$g_2$ \emph{type-2 gap intervals} of length $\ell_p = 12\ell$,

\end{itemize}
where
the $\hat n + \hat m + g_1 + g_2$
vertex/edge/gap intervals are interspersed between
the $\hat n + \hat m + g_1 + g_2 + 1$
separator intervals.

\paragraph{The family $\I$ of $n$ intervals}

Let $n
= (\hat n + \hat m + g_1 + g_2 + 1)\cdot (\tau + 1)
+ \hat n\cdot (2 + 2(\hat\kappa - 1))
+ \hat m\cdot (2 + 4)
+ \hat\kappa + {\hat\kappa\choose 2}$.
Construct the family $\I = \I_0 \cup \I_1 \cup \I_2 \cup \I_3$ of $n$ intervals
in four parts as follows.

First,
for each of the
$\hat n + \hat m + g_1 + g_2 + 1$ separator intervals of length $1$,
partition it further into $\tau + 1$ \emph{fractional separator intervals}
of length $\ell_f$,
and put them in $\I_0$.

Next, for each vertex $i$ of color $\imath$,
take a distinct vertex interval of length $\ell_v$ in the partition of $B$,
partition it further into $2 + 2(\hat\kappa - 1)$ intervals,
which encode the color $\imath$ and the vertex $i$,
and put them in $\I_1$:
\begin{itemize}\setlength\itemsep{0pt}

		\item
			one interval of each length
			$\ell_{\{\imath\}}^+$ and $\ell_{\{\imath\}}^-$,

		\item
			two intervals for each color $\jmath \neq \imath$, including
		\begin{itemize}\setlength\itemsep{0pt}

				\item
					one interval of each length
					$\dot\ell_{i\jmath}^+$ and $\dot\ell_{i\jmath}^-$.

		\end{itemize}

\end{itemize}

Next,
for each edge $\{i,j\}$ of color pair $\{\imath,\jmath\}$,
take a distinct edge interval of length $\ell_e$ in the partition of $B$,
partition it further into $2 + 4$ intervals,
and put them in $\I_2$:
\begin{itemize}\setlength\itemsep{0pt}

		\item
			one interval of each length
			$\ell_{\{\imath\jmath\}}^+$ and $\ell_{\{\imath\jmath\}}^-$,

		\item
			four intervals including
		\begin{itemize}\setlength\itemsep{0pt}

				\item
					one interval of each length
					$\ddot\ell_{i\jmath}^+$ and $\ddot\ell_{i\jmath}^-$,

				\item
					one interval of each length
					$\ddot\ell_{j\imath}^+$ and $\ddot\ell_{j\imath}^-$.

		\end{itemize}

\end{itemize}

Finally,
construct
$\hat\kappa$ intervals of length $\ell_v$,
and ${\hat\kappa\choose 2}$ intervals of length $\ell_e$,
all sharing the same left endpoint as $B$,
and put them in $\I_3$.

\begin{figure}[htb]
	\centering\includegraphics{graph.eps}\input{tau.tex}%
	\begin{align*}
		324 &
		= 196 + 128
		= 197 + 127
		= 200 + 124
		= 201 + 123
		= 202 + 122
		= 204 + 120\\
		&
		= 182 + 142
		= 181 + 143
		= 178 + 146
		= 177 + 147
		= 176 + 148
		= 174 + 150
	\end{align*}
	\caption{Top: A graph of $\hat n = 6$ vertices and $\hat m = 7$ edges with $\hat\kappa = 3$ colors,
	where vertex $i$ has color $i \bmod 3$.
	The three vertices $2,3,4$ form a colored clique.
	Center:
	The $\hat n = 6$ vertex intervals and $\hat m = 7$ edge intervals
	are partitioned into the listed lengths,
	where the incidence lengths for the colored clique $2,3,4$ are highlighted in red.
	Bottom:
	The highlighted lengths are complementary,
	and add up to the pairing length $\ell_p = 324$.}\label{fig:tau}
\end{figure}

\bigskip
This completes the construction of $B$ and $\I$.
Refer to Figure~\ref{fig:tau} for an example.
Among the intervals in the partition of $B$,
the separator, vertex, and edge intervals
are further partitioned to construct intervals in
$\I_0$, $\I_1$, and $\I_2$, respectively.
Only the gap intervals are not used in the construction of $\I$;
their total length is $g_1\cdot 3\ell + g_2\cdot 12\ell$.
On the other hand, the total length of the intervals in $\I_3$ is
\begin{align*}
	\hat\kappa\cdot\ell_v + {\hat\kappa\choose 2}\cdot\ell_e
	&= \hat\kappa\cdot(2\cdot 3\ell + 2(\hat\kappa - 1)\cdot 5\ell)
	+ {\hat\kappa\choose 2}\cdot(2\cdot 3\ell + 4\cdot 7\ell)
	\\
	&= \left(2\hat\kappa + 2{\hat\kappa\choose 2}\right)\cdot 3\ell
	+ 4{\hat\kappa\choose 2}\cdot 12\ell
	= g_1\cdot 3\ell + g_2\cdot 12\ell.
\end{align*}
Thus $\ell_\I = \ell_B$, as expected.

For convenience we have used fractional intervals of length $\ell_f = \frac1{\tau + 1}$.
By scaling with factor $\tau + 1$,
the coordinates of all intervals can be converted to integers polynomial in
$\hat\kappa$ and $\hat n$.
Thus the reduction is strongly polynomial.
Since $\tau = O(\hat\kappa^2)$, the reduction is also FPT\@.
We claim that
$G$ has a colored clique of $\kappa$ vertices
if and only if
$B$ can be tiled with $\I$ by moving $\tau$ intervals,
if and only if
$\I$ can be joined into a contiguous interval in $\tau$ moves.

\paragraph{\textsc{Colored Clique} $\implies$ \textsc{Tile}}

Suppose that $G$ has a colored clique $K$ of $\hat\kappa$ vertices.
We will tile $B$ with $\I$ by moving $\tau$ intervals as follows.
First move the $\hat\kappa + {\hat\kappa\choose 2}$ intervals in $\I_3$
to cover the $\hat\kappa$ vertex intervals 
and the ${\hat\kappa\choose 2}$ edge intervals inside $B$
corresponding to the vertices and edges in $K$.
Next move the corresponding intervals in $\I_1$ and $\I_2$,
including the $2 + 2(\hat\kappa - 1)$ intervals
composing each of these vertex intervals,
and the $2 + 4$ intervals
composing each of these edge intervals,
to cover the $g_1 + g_2$ gap intervals:
one interval
of length $\ell_{\{\imath\}}^\pm$ or $\ell_{\{\imath\jmath\}}^\pm$
for each type-1 gap interval;
two intervals
of complementary lengths $\dot\ell_{i\jmath}^+ + \ddot\ell_{i\jmath}^- = \ell_p$
or $\dot\ell_{i\jmath}^- + \ddot\ell_{i\jmath}^+ = \ell_p$
for each type-2 gap interval.
The number of these intervals in $\I_1$ and $\I_2$ is
$$
\hat\kappa
\cdot
(2 + 2(\hat\kappa - 1))
+
{\hat\kappa\choose 2}
\cdot
(2 + 4)
= 2\hat\kappa + 2{\hat\kappa\choose 2} + 2\cdot 4{\hat\kappa\choose 2}
= g_1 + 2 g_2.
$$
The total number of moves is
$$
\hat\kappa + {\hat\kappa\choose 2}
+ g_1 + 2 g_2
= 3\hat\kappa + 3{\hat\kappa\choose 2} + 8{\hat\kappa\choose 2}
= \tau.
$$

Since $K$ is a colored clique,
the intervals in $\I_1$
composing the $\hat\kappa$ vertex intervals
include one pair of intervals of lengths
$\ell_{\{\imath\}}^\pm$
for each color $\imath$,
and the intervals in $\I_2$
composing the ${\hat\kappa\choose 2}$ edge intervals
include exactly one pair of intervals of lengths
$\ell_{\{\imath\jmath\}}^\pm$
for each unordered pair of colors $\{\imath,\jmath\}$ with $\imath < \jmath$,
to cover the
$g_1 = 2\hat\kappa + 2{\hat\kappa\choose 2}$ type-1 gap intervals of these lengths.

Moreover,
by design of the complementary incidence lengths
in partitioning vertex and edge intervals inside $B$ into intervals in $\I_1$ and $\I_2$,
respectively,
the intervals
composing the $\hat\kappa$ vertex intervals
and the ${\hat\kappa\choose 2}$ edge intervals also include,
for each edge $\{i,j\}$ of color pair $\{\imath,\jmath\}$ in $K$,
one interval of each length
$\dot\ell_{i\jmath}^\pm,
\ddot\ell_{i\jmath}^\pm,
\dot\ell_{j\imath}^\pm,
\ddot\ell_{j\imath}^\pm$,
forming four complementary pairs
$\dot\ell_{i\jmath}^+ + \ddot\ell_{i\jmath}^- =
\dot\ell_{i\jmath}^- + \ddot\ell_{i\jmath}^+ =
\dot\ell_{j\imath}^+ + \ddot\ell_{j\imath}^- =
\dot\ell_{j\imath}^- + \ddot\ell_{j\imath}^+ = \ell_p$,
to cover the
$g_2 = 4{\hat\kappa\choose 2}$
type-2 gap intervals of length $\ell_p$.
Thus $B$ is tiled with $\I$ by moving $\tau$ intervals.

\paragraph{\textsc{Tile} $\implies$ \textsc{Join}}

Any tiling of $B$ with $\I$ necessarily joins $\I$ into a contiguous interval.
Thus if $B$ can be tiled with $\I$ by moving $\tau$ intervals,
then $\I$ can be joined into a contiguous interval in $\tau$ moves.

\paragraph{\textsc{Join} $\implies$ \textsc{Colored Clique}}

Suppose that $\I$ can be joined into a contiguous interval in $\tau$ moves.
We will find a colored clique of $\hat\kappa$ vertices in $G$.

By the same argument as in the preceding proof for parameter $\kappa$,
we can assume without loss of generality that no fractional interval in $\I_0$ is moved.
Then, to join the intervals in $\I$ into a contiguous interval,
we must cover all gap intervals inside $B$.

All
$\hat\kappa + {\hat\kappa\choose 2}$
intervals in $\I_3$ contain the $\tau + 1$ fractional intervals in $\I_0$
composing the leftmost separator interval inside $B$,
so they must be moved.
Since their lengths, $\ell_v$ and $\ell_e$,
are greater than the various lengths of the gap intervals inside $B$,
they cannot be moved to fill the gaps directly.
To fill the gaps, we have to use intervals in $\I_1$ and $\I_2$.

Recall that the color lengths $\ell_{\{\imath\}}^\pm$
and the color-pair lengths
$\ell_{\{\imath\jmath\}}^\pm$
differ from each other by factors less than two,
and are smaller than the incidence lengths
$\dot\ell_{i\jmath}^\pm$ and $\ddot\ell_{i\jmath}^\pm$.
Thus each type-1 gap interval must be covered by one interval of the same length in $\I_1$ or $\I_2$.
Also recall that
the pairing length $\ell_p$ of the type-2 gap intervals is greater than
the color lengths, the color-pair lengths, and the incidence lengths
of the intervals in $\I_1$ and $\I_2$.
Thus each type-2 gap interval requires two moves to cover.
In total, we need $g_1 + 2 g_2$ moves to fill the gaps.
Since $\tau = \hat\kappa + {\hat\kappa\choose 2} + g_1 + 2 g_2$,
there is not a single move to waste.
In summary,
we must first move the
$\hat\kappa + {\hat\kappa\choose 2}$
intervals in $\I_3$
to cover some portions of vertex and edge intervals inside $B$,
and then move exactly $g_1 + 2 g_2$ intervals in $\I_1$ and $\I_2$
composing these covered portions
to cover the $g_1 + g_2$ gap intervals inside $B$.

First consider the
$g_1 = 2\hat\kappa + 2{\hat\kappa\choose 2}$
type-1 gap intervals.
Each type-1 gap interval of a color (respectively, color-pair) length
must be covered by one interval of the same length in $\I_1$ (respectively, $\I_2$),
which was constructed from some vertex (respectively, edge) interval inside $B$.
Since there is a type-1 gap interval of every color and color-pair length,
the
$\hat\kappa + {\hat\kappa\choose 2}$
intervals in $\I_3$
must cover exactly $\hat\kappa$ vertex intervals
and ${\hat\kappa\choose 2}$ edge intervals inside $B$,
and the corresponding $\hat\kappa$ vertices and
${\hat\kappa\choose 2}$ edges in $G$ must span all $\hat\kappa$ colors
and all ${\hat\kappa\choose 2}$ color pairs.

Next consider the $g_2 = 4{\hat\kappa\choose 2}$ type-2 gap intervals.
Among the intervals in $\I_1$ and $\I_2$ composing
the $\hat\kappa$ vertex intervals
and ${\hat\kappa\choose 2}$ edge intervals inside $B$,
the intervals of color and color-pair lengths are
used to cover the type-1 gap intervals,
so the intervals of incidence lengths are left to cover the type-2 gap intervals.
They include exactly $\hat\kappa\cdot 2(\hat\kappa - 1) = g_2$
intervals of incidence lengths
$\dot\ell_{i\jmath}^\pm$ in $\I_1$,
and exactly
${\hat\kappa\choose 2}\cdot 4 = g_2$
intervals of incidence lengths
$\ddot\ell_{i\jmath}^\pm$ in $\I_2$.
Recall that
$4\ell <
\ddot\ell_{i\jmath}^\pm
< 6\ell <
\dot\ell_{i\jmath}^\pm
< 8\ell$.
Thus each type-2 gap interval, of length $\ell_p = 12\ell$,
must be covered by two intervals of complementary lengths,
either $\dot\ell_{i\jmath}^+ + \ddot\ell_{i\jmath}^- = \ell_p$
or $\dot\ell_{i\jmath}^- + \ddot\ell_{i\jmath}^+ = \ell_p$,
with matching subscript $i\jmath$.
Such pairings require
the edge $\{i,j\}$ of color pair $\{\imath,\jmath\}$
be consistent with the vertices $i'$ and $j'$
of the corresponding colors $\imath$ and $\jmath$,
that is, $i = i'$ and $j = j'$.
Thus
the $\hat\kappa$ vertices and the ${\hat\kappa\choose 2}$ edges
form a colored clique.

\bigskip
This completes the proof of Theorem~\ref{thm:w1hard}.

\bigskip
\begin{remark}
The four problems 
\textsc{Pack},
\textsc{Cover},
\textsc{J-Pack},
and
\textsc{Tile}
have a variant where $B$ is a simple closed curve,
and $\I$ is a family of intervals on the curve $B$.
Since the family $\I$ of intervals are inside the interval $B$ in our reductions
for the proof of Theorem~\ref{thm:w1hard},
these reductions can be easily adapted to the closed-curve variant
and yield similar hardness results.
\end{remark}

\appendix

\newpage
\section{Torty and Shields}

\begin{center}
	time limit per test: 1 second\\
	memory limit per test: 256 megabytes\\
	input: standard input\\
	output: standard output
\end{center}

\bigskip
Torty the sea turtle is fending off invading jellyfish!

Torty has an army of $n$ turtles standing in a line, each holding a shield of length $\ell$.
Initially, the shield of the $i$-th turtle is at the interval $[x_i, x_i + \ell)$.
Torty wants the turtles to arrange their defense positions so that their shields
concatenate into a contiguous interval of length $n\cdot\ell$
inside the battle interval $[0, b)$.

Turtles are serene creatures.
They would rather bask in the sun than move around.
But once moving, they can go any distance.
What is the minimum number of turtles that have to move to form the target configuration?

Torty entrusts this important problem to you. Solve it quickly!

\subsection*{Input}

The first line contains three integers $n$, $\ell$, and $b$, where $1 \le n \le 2\cdot 10^5$,
$1 \le \ell, b \le 10^9$, and $n\cdot\ell \le b$.
The second line contains $n$ integers $x_1, x_2, \ldots, x_n$,
where $-10^9 \le x_i \le 10^9$.
The shields may overlap at their initial positions.

\subsection*{Output}

Print the minimum number of turtles that have to move.

\subsection*{Sample input}

\begin{verbatim}
6 2 13
-1 3 4 5 12 11
\end{verbatim}

\subsection*{Sample output}

\begin{verbatim}
3
\end{verbatim}

\subsection*{Note}
The best way is to move the turtle with shield at $[-1, 1)$ to $[1, 3)$,
the turtle with shield at $[4, 6)$ to $[7, 9)$,
and the turtle with shield at $[12, 14)$ to $[9, 11)$.
Then all shields are joined into a contiguous interval $[1, 13)$
inside the battle interval $[0, 13)$.

\newpage
\section{Blue Puppy and UFOs}

\begin{center}
	time limit per test: 2 seconds\\
	memory limit per test: 256 megabytes\\
	input: standard input\\
	output: standard output
\end{center}

\bigskip
Blue Puppy is looking for his little brother Torty.
It's time to go home, but Torty is still playing among sea turtles and jellyfish
on the beach, which is an interval $[0, b)$.

Blue Puppy has $n$ UFOs hovering above the beach,
where the $i$-th UFO covers an interval $[x_i, x_i + \ell)$
with surveillance cameras.

What is the minimum number of UFOs that have to move so that the $n$ UFOs together cover
the whole length of the beach?

\subsection*{Input}

The first line contains three integers $n$, $\ell$, and $b$, where $1 \le n \le 10^5$,
$1 \le \ell, b \le 10^9$, and $n\cdot\ell \ge b$.
The second line contains $n$ integers $x_1, x_2, \ldots, x_n$,
where $-10^9 \le x_i \le 10^9$.
The intervals covered by the UFOs may overlap at their initial positions.

\subsection*{Output}

Print the minimum number of UFOs that have to move.

\subsection*{Sample input}

\begin{verbatim}
8 2 10
-1 -2 3 4 5 8 9 10
\end{verbatim}

\subsection*{Sample output}

\begin{verbatim}
2
\end{verbatim}

\subsection*{Note}
One of the best ways is to move the UFO at $[4, 6)$ to $[1, 3)$,
and move the UFO at $[10, 12)$ to $[6, 8)$.
Then the $8$ UFOs together cover the interval $[-2, 11)$,
which contains the beach $[0, 10)$.

\end{document}